\definecolor{shadecolor}{gray}{.85}%
\definecolor{tintedcolor}{gray}{.80}%
\definecolor{mytintedcolor}{gray}{.95}%
\newdimen\svparindent
\newcounter{tmpthm}
\newenvironment{mytinted}{%
  \MakeFramed {\FrameRestore}}%
{\endMakeFramed}
{\endlist\end{mytinted}\egroup}
\newtheorem{theorem}{Theorem}
\newtheorem{proposition}{Proposition}
\newtheorem{corollary}{Corollary}
\newtheorem{lemma}[theorem]{Lemma}
\newtheorem{claim}{Claim}
\newtheorem{conjecture}{Conjecture}
\theoremstyle{definition}
\newtheorem{definition}[theorem]{Definition}
\theoremstyle{remark}
\newtheorem{remark}[theorem]{Remark}
\newcommand{\ERCagreement}{\thanks{
\begin{minipage}{.67\textwidth}This paper is a part of projects that have received funding from the European Research Council (ERC) 
under the European Union's Horizon 2020 research and innovation programme (grant agreements No. 677651 -- {\sc{Total}}, and No. 810115 -- {\sc Dynasnet}). 
Xuding Zhu's research is supported by grant numbers NSFC 11971438, ZJNSF LD19A010001. The research of Jaroslav Ne\v set\v ril, Patrice Ossona de Mendez, and Xuding Zhu is partially supported by 111 project of the Ministry of Education of China.
\end{minipage}\hfill\begin{minipage}{.25\textwidth}\includegraphics[width=\textwidth]{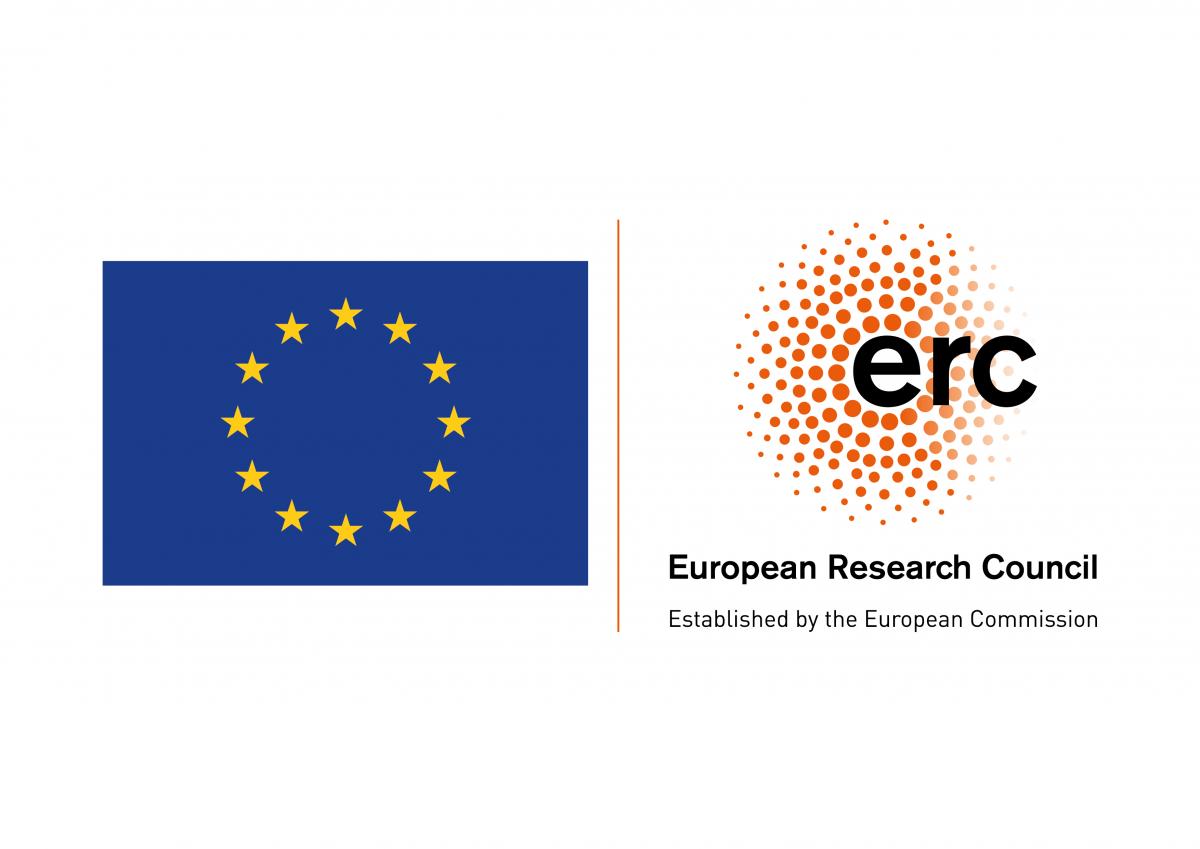}\end{minipage}\hfill}}
\newcommand{\sfrac}[2]{#1/#2}
\newcommand{\myparagraph}[1]{\medskip \noindent{\bf{#1}}}
\newcommand{\cls}[1]{\overline{#1}}
\newcommand{\WReach}{\mathrm{WReach}}
\newcommand{\grad}{\nabla}
\newcommand{\dist}{\mathrm{dist}}
\newcommand{\wcol}{\mathrm{wcol}}
\newcommand{\col}{\mathrm{col}}
\newcommand{\girth}{\mathrm{girth}}
\newcommand{\chisub}{\chi_\mathrm{sub}}
\newcommand{\N}{\mathbb{N}}
\newcommand{\Cc}{\mathcal{C}}
\newcommand{\Dd}{\mathcal{D}}
\newcommand{\Oh}{\mathcal{O}}
\newcommand{\eps}{\varepsilon}
\newcommand{\clust}{\mathcal{X}}
\renewcommand{\leq}{\leqslant}
\renewcommand{\geq}{\geqslant}
\def\cqedsymbol{\ifmmode$\lrcorner$\else{\unskip\nobreak\hfil
\penalty50\hskip1em\null\nobreak\hfil$\lrcorner$
\parfillskip=0pt\finalhyphendemerits=0\endgraf}\fi} 
\newcommand{\cqed}{\renewcommand{\qed}{\cqedsymbol}}
\begin{document}

\title{Clustering Powers of Sparse Graphs}
\ERCagreement
\author{Jaroslav Ne\v set\v ril}
\address[Jaroslav Ne\v set\v ril]{Institute for Theoretical Computer Science, Charles University, Prague, Czech Republic}
\email{nesetril@iuuk.mff.cuni.cz}
\author{Patrice Ossona~de~Mendez}
\address[Patrice Ossona~de~Mendez]{Centre d'Analyse et de Math\'ematiques Sociales (UMR 8557) and CNRS, Paris, France}
\email{pom@ehess.fr}
\author{Micha{\l} Pilipczuk}
\address[Micha{\l} Pilipczuk]{Institute of Informatics, University of Warsaw, Poland}
\email{michal.pilipczuk@mimuw.edu.pl}
\author{Xuding Zhu}
\address[Xuding Zhu]{Department of Mathematics, Zhejiang Normal University, China}
\email{xdzhu@zjnu.edu.cn, xudingzhu@gmail.com}

\date{\today}

\begin{abstract}
We prove that if $G$ is a sparse graph --- it belongs to a fixed class of bounded expansion~$\Cc$ --- and $d\in \N$ is fixed, then
the $d$th power of $G$ can be partitioned into cliques so that contracting each of these clique to a single vertex again yields a sparse graph.
This result has several graph-theoretic and algorithmic consequences for powers of sparse graphs, 
including bounds on their subchromatic number and efficient approximation algorithms for the chromatic number and the clique number.
\end{abstract}

\maketitle

\section{Introduction}\label{sec:intro}

For a graph $G$ and $d\in \N$, the {\em{$d$th power}} of $G$, denoted $G^d$, is the graph on the same vertex set as~$G$ 
where vertices $u,v$ are considered adjacent if and only if the distance between them in $G$ is at most~$d$.
In this work we consider the following question: if we assume that a graph $G$ is sparse and $d$ is a fixed constant, what can we say about the structure of the graph $G^d$?
Clearly, $G^d$ does not need to be sparse; for instance, if $G$ is a star, then $G^2$ is complete.
However, the intuition is that even if $G^d$ becomes dense, it should retain some strong structural properties derived from the sparseness of~$G$.

We shall focus on two notions of uniform, structural sparseness in graphs: {\em{bounded expansion}} and {\em{nowhere denseness}}.
A class of graphs $\Cc$ has {\em{bounded expansion}} if for every fixed $r\in \N$, one cannot find graphs with arbitrary high edge density among depth-$r$ minors of graphs from $\Cc$.
Here, a graph $H$ is a {\em{depth-$r$ minor}} of a graph $G$ if $H$ can be obtained from a subgraph of $G$ by contracting mutually disjoint connected subgraphs of radius at most $r$.
More generally, we say that $\Cc$ is {\em{nowhere dense}} if for every $r\in \N$, not all complete graphs can be obtained as depth-$r$ minors of graphs from $\Cc$.

Thus, bounded expansion and nowhere denseness can be regarded as sparseness that persists even after local modifications, which are modelled by taking bounded depth minors.
Following the introduction of these concepts by the first two authors~\cite{NesetrilM08a,NesetrilM08b,NesetrilM11}, the last decade has witnessed a rapid development of various structural techniques centered around them, 
which proved to be useful both in the combinatorial analysis and in the algorithm design on sparse graphs. 
We invite the reader to various sources, e.g.~\cite{sparsity,notes}, for a comprehensive introduction to this area.

It has been recently realized that the developed techniques can be used not only to work with sparse graphs, but also to explain the structure in dense graphs derived from sparse ones.
For instance, Gajarsk\'y et al.~\cite{GajarskyKNMPST18} considered graph classes that can be obtained from classes of bounded expansion by means of one-dimensional first-order interpretations.
They proved that such classes can be equivalently characterized by the existence of {\em{low shrubdepth colorings}}, which can be regarded as a dense analogue of {\em{low treedepth colorings}} that characterize classes
of bounded expansion~\cite{NesetrilM06}.
This connection was earlier discovered by Kwon et al.~\cite{KwonPS20} for the operation of taking a fixed-degree power of a graph, which is a particular case of a one-dimensional first-order interpretation.
A somewhat tangential direction was recently explored by Fabia\'nski et al.~\cite{FabianskiPST19}, 
who used structural properties of powers of sparse graphs, 
inspired by connections with the model-theoretic concept of {\em{stability}}, to devise efficient parameterized algorithms for
domination and independence problems on such powers.

In all the abovementioned works, taking a fixed-degree power of a graph is a prime example of a well-behaved operation which turns a sparse graph into a dense graph that retains useful structural properties.
The goal of this work is to further explore and describe these properties.

\myparagraph{Our contribution.} We prove that if $G$ is a sparse graph and $d\in \N$ is fixed, then the graph $G^d$ admits a {\em{clustering}} $\clust$ --- a partition of the vertex set into cliques --- such that
the quotient graph $G^d/\clust$ --- the graph obtained from $G^d$ by contracting each clique of $\clust$ into a single vertex --- is also sparse. See Corollaries~\ref{thm:clustering-be} and~\ref{thm:clustering-nd} 
in Section~\ref{sec:clustering} for a formal statement.
Thus, on the conceptual level, a fixed-degree power of a sparse graph can still be regarded as a sparse graph, but with some vertices replaced by possibly large cliques.

We explore several consequences of this statement. 
On the graph-theoretic side, we show that whenever $G$ comes from a fixed class of bounded expansion $\Cc$ and $d\in \N$ is fixed, the graph $G^d$ has a bounded {\em{subchromatic number}}, which
is the least number of colors needed for a coloring of vertices so that every color class induces a disjoint
union of complete graphs. We also show that the coloring number (also called {\em{degeneracy}}) of $G^d$ is within a constant multiplicative factor from both its chromatic number and its clique number.
This gives an alternative proof of the observation of Gajarsk\'y et al.~\cite{GajarskyKNMPST18} that fixed-degree powers of classes of bounded expansion are linearly $\chi$-bounded.

On the algorithmic side, we show that if we are given a graph $H$ which is promised to be an induced subgraph of a graph $G^d$, where $d\in \N$ is fixed and $G$ belongs to a fixed class $\Cc$ of bounded expansion,
then we can constructively approximate, up to a constant factor and in $\Oh(nm)$ time, 
both the chromatic number of $H$ and the clique number of $H$.
Further, we show that in such graphs $H$ there are polynomially many inclusion-wise maximal cliques, and they all can be enumerated in polynomial time. This holds even under the
milder assumption that the class $\Cc$ is only required to be nowhere dense.
An immediate consequence of this statement is a polynomial-time algorithm for finding a clique of maximum size in graphs $H$ as above.
However, here the degree of the polynomial running time is not uniformly bounded: it depends on $d$ and the class $\Cc$ in question.

\section{Preliminaries}

\myparagraph{Notation.} By $\N$ we denote the set of nonnegative integers. The vertex set and the edge set of a graph $G$ are denoted by $V(G)$ and $E(G)$, respectively.
The {\em{open neighborhood}} of a vertex $u$ consists of all the neighbors of $u$ and is denoted by $N_G(u)$. The {\em{closed neighborhood}} of $u$ is $N_G[u]=N_G(u)\cup \{u\}$.
The {\em{clique number}} of $G$ --- the maximum size of a clique in $G$ --- is denoted by $\omega(G)$.
The {\em{chromatic number}} of $G$ --- the minimum number of colors needed for a proper coloring of $G$ --- is denoted by $\chi(G)$.
Clearly, for every graph $G$ we have $\chi(G)\geq \omega(G)$.

A {\em{class of graphs}} is just a (usually infinite) set of graphs. A {\em{graph parameter}} is a function that maps graphs to nonnegative integers; examples include the clique number and the chromatic number.
For a class of graphs $\Cc$ and a graph parameter $\pi(\cdot)$, we denote $\pi(\Cc)=\sup_{G\in \Cc}\pi(G)$.
We say that $\Cc$ has {\em{bounded $\pi$}} if $\pi(\Cc)$ is finite.
For a class $\Cc$, by $\cls{\Cc}$ we denote the {\em{hereditary closure}} of $\Cc$, that is, the class comprising all induced subgraphs of graphs from $\Cc$.

For $d\in \N$ and a graph $G$, the {\em{$d$th power}} of $G$ is the graph $G^d$ on the same vertex set as $G$ such that two vertices $u,v$ are adjacent in $G^d$ if and only if the distance between them in $G$ is at most $d$.
For a class of graphs $\Cc$, we write $\Cc^d=\{G^d\colon G\in \Cc\}$.

A {\em{partition}} of a set $U$ is a family $\clust$ of non-empty, disjoint subsets of $U$ such that $\bigcup \mathcal{X}=U$.
The elements of a partition are called {\em{blocks}}.

\myparagraph{Classes of sparse graphs.} For $r\in \N$, a graph $H$ is a {\em{depth-$r$ minor}} of a graph $G$ if $H$ can be obtained from a subgraph of $G$ by contracting pairwise disjoint connected subgraphs of radius at most~$r$.
For a graph $G$, by $\grad_r(G)$ we denote the maximum {\em{edge density}} (ratio between the number of edges and vertices) among depth-$r$ minors of $G$.
Similarly, $\omega_r(G)$ is the largest clique number among depth-$r$ minors of $G$.
With these notions in place, we can define the main notions of sparsity that are considered in this work.

\begin{definition}
 For class of graphs $\Cc$, we say that
 \begin{itemize}
  \item $\Cc$ has {\em{bounded expansion}} if $\grad_r(\Cc)$ is finite for every $r\in \N$; and
  \item $\Cc$ is {\em{nowhere dense}} if $\omega_r(\Cc)$ is finite for every $r\in \N$.
 \end{itemize}
\end{definition}

Obviously, every graph class of bounded expansion is nowhere dense, but the converse implication is not true in general; see~\cite{sparsity}.

\myparagraph{Coloring numbers.} A {\em{vertex ordering}} $\sigma$ of a graph $G$ is simply a linear order on its vertex set. We write $u<_\sigma v$ to indicate that $u$ is placed before $v$ in $\sigma$.
The {\em{coloring number}} of a vertex ordering $\sigma$ of $G$ is the quantity
$$\col(G,\sigma) = 1+\max_{u\in V(G)} |\{v\colon uv\in E(G)\textrm{ and }v<_\sigma u\}|.$$
The {\em{coloring number}} of $G$, denoted $\col(G)$, is defined as the smallest coloring number among the vertex orderings of $G$.
The quantity $\col(G)-1$ is often called the {\em{degeneracy}} of $G$.

Given $G$ and a vertex ordering $\sigma$, it is easy to compute a proper coloring of $G$ with $\col(G,\sigma)$ colors: 
iterate through the vertices in the order of $\sigma$, and assign to each vertex $u$ a color that is not present among the neighbors of $u$ that are placed before in~$\sigma$.
Further, it is well-known that a vertex ordering with the optimum coloring number can be computed in linear time by iteratively extracting from the graph a vertex with the smallest degree, and putting it
in front of all the vertices extracted before. Combining these two facts yields the following.

\begin{lemma}\label{lem:greedy}
 For every graph $G$, we have $\chi(G)\leq \col(G)$. Moreover, given $G$, a proper coloring of $G$ with $\col(G)$ colors can be computed in linear time.
\end{lemma}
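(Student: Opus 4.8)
The plan is to establish the two claims separately, building on the two facts sketched just before the statement. For the inequality $\chi(G)\le\col(G)$, I would fix an arbitrary vertex ordering $\sigma$ and analyze the greedy coloring: process the vertices in the order of $\sigma$, assigning to each $u$ the smallest color absent among its neighbors already processed. When $u$ is reached, the only colored neighbors are those preceding $u$ in $\sigma$, and by definition of $\col(G,\sigma)$ there are at most $\col(G,\sigma)-1$ of them; hence at most $\col(G,\sigma)-1$ colors are forbidden and one of the first $\col(G,\sigma)$ colors is free. The resulting coloring is proper and uses at most $\col(G,\sigma)$ colors, so taking $\sigma$ to attain the minimum in the definition of $\col(G)$ gives $\chi(G)\le\col(G)$.

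For the algorithmic claim I would compute an optimal ordering by the min-degree extraction process: repeatedly remove a vertex of currently minimum degree and let $\sigma$ be the reverse of the extraction sequence. Writing $v_1,\dots,v_n$ for the extraction order and $G_i$ for the graph remaining just before $v_i$ is removed, the back-neighbors of $v_i$ in $\sigma$ are exactly its neighbors in $G_i$, and $v_i$ has minimum degree in $G_i$, whence $\col(G,\sigma)=1+\max_i\delta(G_i)$, with $\delta$ denoting minimum degree. The heart of the argument is to show that this ordering is optimal, which I would do through the degeneracy $k:=\max_{H\subseteq G}\delta(H)$. Since each $G_i$ is a subgraph of $G$ we have $\max_i\delta(G_i)\le k$; conversely, choosing $H$ with $\delta(H)=k$ and letting $v_j$ be its first extracted vertex, at that moment $H\subseteq G_j$, so $v_j$ has at least $k$ neighbors in $G_j$, and being of minimum degree there forces $\delta(G_j)\ge k$. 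Thus $\col(G,\sigma)=1+k$. A symmetric observation bounds every ordering from below: in an arbitrary $\tau$, the last vertex of $V(H)$ has all its $\ge k$ neighbors in $H$ before it, so $\col(G,\tau)\ge 1+k$. Hence $\col(G)=1+k=\col(G,\sigma)$.

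Finally both phases run in linear time: the extraction is realized with a bucket queue keyed by current degree, where each edge causes only a constant number of updates, and the greedy pass over $\sigma$ inspects each vertex's back-neighbors once, for a total of $\Oh(|V(G)|+|E(G)|)$. I expect the only real obstacle to be this bookkeeping --- verifying that the bucket-queue implementation of min-degree extraction achieves linear time --- since the conceptual content is entirely captured by the degeneracy characterization that identifies the min-degree ordering as optimal.
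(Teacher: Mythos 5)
Your proof is correct and follows exactly the route the paper takes: greedy coloring along an ordering to get $\chi(G)\leq \col(G,\sigma)$, plus the min-degree extraction ordering computed via a bucket queue for the linear-time claim. The only difference is one of detail --- the paper simply cites the optimality of the min-degree ordering as well known, whereas you prove it via the degeneracy characterization $\col(G)=1+\max_{H\subseteq G}\delta(H)$, which is a correct and standard justification of that fact.
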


Here and in the sequel, when we speak about a linear or a quadratic running time, we measure it in the total input size, which is the number of vertices plus the number of edges of the input graph.

The {\em{generalized coloring numbers}} were introduced by Kierstead and Yang~\cite{KiersteadY03} to lift the idea of the coloring number to larger distances.
For $r\in \N$, a graph $G$, a vertex ordering $\sigma$ of $G$, and two vertices $u,v$ satisfying $v\leq_\sigma u$, we say that $v$ is {\em{weakly $r$-reachable}} from $u$ if there exists a path of length at most $r$ from $u$ to $v$
such that all the vertices on this path are not placed before $v$ in $\sigma$. The set of vertices that are weakly-$r$ reachable from $u$ in $\sigma$ is denoted by $\WReach_r[G,\sigma,u]$.
Then the {\em{weak $r$-coloring number}} of $\sigma$ is
$$\wcol_{r}(G,\sigma)=\max_{u\in V(G)} |\WReach_r[G,\sigma,u]|,$$
and the {\em{weak $r$-coloring number}} of $G$, denoted $\wcol_r(G)$, is the smallest $\wcol_r(G,\sigma)$ for $\sigma$ ranging over vertex orderings of $G$.
Note that $\wcol_1(G)=\col(G)$ for every graph $G$.
We remark that there are also other, related notions of generalized coloring numbers --- {\em{strong $r$-coloring number}} and {\em{$r$-admissibility}} --- but we will not use them in this work.

As shown in~\cite{NesetrilM11,Zhu09}, weak coloring numbers can be used to characterize classes of sparse graphs.

\begin{theorem}[\cite{Zhu09}]\label{thm:wcol-be}
 A class of graphs $\Cc$ has bounded expansion if and only if $\wcol_r(\Cc)$ is finite for every $r\in \N$.
\end{theorem}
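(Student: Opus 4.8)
The plan is to prove the two implications separately, establishing quantitative bounds between the densities $\grad_r$ and the weak coloring numbers $\wcol_s$ in each direction.

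For the backward implication---if $\wcol_r(\Cc)$ is finite for every $r$ then $\Cc$ has bounded expansion---I would prove the explicit bound $\grad_r(G)\le \wcol_{2r+1}(G)$ for every graph $G$ and every $r\in\N$. Fix a vertex ordering $\sigma$ with $\wcol_{2r+1}(G,\sigma)=c$, and let $H$ be an arbitrary depth-$r$ minor of $G$, witnessed by pairwise disjoint connected branch sets $(B_i)_{i\in V(H)}$, where each $B_i$ has radius at most $r$ with center $v_i$. For every edge $ij$ of $H$, fix an edge of $G$ between $B_i$ and $B_j$; joining a radius path from $v_i$ inside $B_i$, this edge, and a radius path to $v_j$ inside $B_j$ yields a path $P_{ij}$ from $v_i$ to $v_j$ of length at most $2r+1$ whose vertices all lie in $B_i\cup B_j$. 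Let $w_{ij}$ be the $\sigma$-minimal vertex of $P_{ij}$; then $w_{ij}\in B_i\cup B_j$, and the two subpaths of $P_{ij}$ ending at $w_{ij}$ certify that $w_{ij}$ is weakly $(2r+1)$-reachable from both $v_i$ and $v_j$. Now orient each edge $ij$ of $H$ towards the endpoint whose branch set contains $w_{ij}$. For a fixed vertex $i$, distinct out-neighbours $j$ yield distinct vertices $w_{ij}$, as they lie in the pairwise disjoint branch sets $B_j$, and each such $w_{ij}$ belongs to $\WReach_{2r+1}[G,\sigma,v_i]$. Hence the out-degree of $i$ is at most $c$, so $|E(H)|\le c\,|V(H)|$ and $\grad_r(G)\le c$. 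Taking the supremum over $G\in\Cc$ gives $\grad_r(\Cc)\le \wcol_{2r+1}(\Cc)<\infty$, as required.

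The forward implication---bounded expansion implies that $\wcol_r(\Cc)$ is finite for each fixed $r$---is the substantial direction, and I would route it through the $r$-admissibility $\mathrm{adm}_r$, the local invariant measuring, for a vertex $v$ in an ordering, the maximum number of internally disjoint paths of length at most $r$ from $v$ to distinct earlier vertices. Two ingredients are needed. First, a combinatorial lemma asserting that bounded admissibility forces bounded weak coloring number: starting from an ordering of small $r$-admissibility, one bounds each set $\WReach_r[G,\sigma,u]$ and obtains an inequality of the shape $\wcol_r(G)\le g(\mathrm{adm}_r(G),r)$. Second---and this is the crux---one must bound $\mathrm{adm}_r(G)$ by a function of the shallow-minor densities $\grad_0(G),\dots,\grad_r(G)$, all of which are finite on a class of bounded expansion.

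For this last bound I would argue contrapositively and construct the ordering greedily, repeatedly deleting a vertex that minimizes its number of internally disjoint short back-paths. If the procedure ever fails to find a vertex of small admissibility, then some subgraph $G'$ has the property that every vertex sends many internally disjoint paths of length at most $r$ to the remaining vertices. By Menger's theorem these paths form large fans, and by routing and grouping their segments---pigeonholing over path lengths and intermediate vertices---one extracts pairwise disjoint connected branch sets of radius at most $r$ realizing many edges, i.e.\ a depth-$r$ minor of arbitrarily large edge density, contradicting bounded expansion. I expect this routing step to be the main obstacle: the short back-paths guaranteed by high admissibility may overlap heavily and reuse endpoints, so the density cannot be read off directly, and the delicate part is to exploit disjointness together with a careful charging argument to convert local path-richness into a globally dense shallow minor. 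Once $\mathrm{adm}_r(\Cc)$ is controlled for every $r$, the first combinatorial ingredient yields $\wcol_r(\Cc)<\infty$, closing the equivalence together with the backward direction.
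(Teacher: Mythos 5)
The paper contains no proof of Theorem~\ref{thm:wcol-be}: it is quoted from~\cite{Zhu09}, so your proposal can only be measured against what a complete, self-contained argument requires. Your backward implication meets that standard. The bound $\grad_r(G)\le\wcol_{2r+1}(G)$ --- obtained by fixing centers $v_i$ of the branch sets, building for each edge $ij$ of the minor a connecting path $P_{ij}$ of length at most $2r+1$ inside $B_i\cup B_j$, taking its $\sigma$-minimal vertex $w_{ij}$, and orienting $ij$ towards the branch set containing $w_{ij}$ --- is the standard argument and is airtight: for a fixed $i$, the vertices $w_{ij}$ over out-neighbours $j$ are pairwise distinct (they lie in pairwise disjoint sets $B_j$) and all lie in $\WReach_{2r+1}[G,\sigma,v_i]$, so every out-degree is at most $\wcol_{2r+1}(G,\sigma)$ and the density bound follows.

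The forward implication, however, has a genuine gap, which you yourself flag. Your plan rests on two unproved ingredients: (i) $\wcol_r(G)\le g(\mathrm{adm}_r(G),r)$, and (ii) $\mathrm{adm}_r(G)$ is bounded in terms of the densities $\grad_0(G),\dots,\grad_r(G)$. Ingredient (i) is true and standard (one charges each weakly $r$-reachable vertex to a sequence of at most $r$ admissibility hops, giving a bound roughly exponential in $r$), but it still needs a written proof. Ingredient (ii) is the actual mathematical content of the whole equivalence, and your sketch stops exactly where the work begins: converting ``every remaining vertex has many internally disjoint short back-paths'' into a dense bounded-depth minor is the theorem, not a routine verification, and you explicitly concede this routing/charging step is an unresolved ``main obstacle''. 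There is also a concrete subtlety your sketch glosses over: in the admissibility measure, the internal vertices of the back-paths must come \emph{later} in the ordering, i.e.\ they lie among the already-placed suffix, so at the moment the greedy fails, the fans are \emph{not} contained in the remaining subgraph $G'$; any correct extraction of disjoint branch sets has to route through placed vertices, which is part of what makes the argument delicate. The missing step is precisely Dvo\v{r}\'ak's lemma bounding admissibility by shallow-minor densities; granting it, your architecture (which, for what it is worth, follows Dvo\v{r}\'ak's later route rather than Zhu's original argument) does yield the theorem, but as written the forward direction is a program with its central lemma absent, not a proof.
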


\begin{theorem}[\cite{NesetrilM11}]\label{thm:wcol-nd}
 A class of graph $\Cc$ is nowhere dense if and only if for every $r\in \N$ and $\eps>0$, there exists $c\in \N$ such that $\wcol_r(G)\leq c\cdot n^\eps$ for every $n$-vertex graph $G\in \Cc$.
\end{theorem}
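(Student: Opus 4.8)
The plan is to prove both implications by squeezing the weak coloring numbers between the densities of shallow minors, measured by $\grad_r(\cdot)$, and an intermediate parameter, the \emph{$r$-admissibility} $\mathrm{adm}_r(\cdot)$ alluded to in the preliminaries. Concretely, I would rely on three structural inequalities, each valid for every graph $G$ and holding with constants that depend only on $r$ and not on the number of vertices: first, a bound of the form $\grad_r(G)\le P_1\bigl(\wcol_{2r+1}(G)\bigr)$ for a fixed polynomial $P_1$, expressing that a small weak coloring number forces sparse shallow minors (here the radius $2r+1$ arises because two branch sets of radius $r$ are joined by one edge); second, $\mathrm{adm}_r(G)\le P_2\bigl(\grad_r(G)\bigr)$, so that admissibility is density-controlled; and third, $\wcol_r(G)\le \mathrm{adm}_r(G)^{r}$ up to a factor depending only on $r$, the content of the Kierstead--Yang analysis relating the generalized coloring numbers to one another. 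The crucial feature is that all three bounds involve only a fixed power or polynomial whose degree depends solely on $r$. Hence \emph{subpolynomial} control on one parameter transfers to the others: if a parameter is at most $c\,n^{\delta}$, a fixed power of it is at most $c^{r}n^{r\delta}$, and one recovers an $n^{\eps}$ bound by choosing the input exponent $\delta$ proportionally smaller at the outset.

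For the implication ``subpolynomial weak coloring numbers $\Rightarrow$ nowhere dense'', I would feed the hypothesis into $\grad_r(G)\le P_1(\wcol_{2r+1}(G))$. Fixing $r$ and $\eps$ and applying the hypothesis at radius $2r+1$ with a sufficiently small exponent yields $\grad_r(G)\le n^{\eps}$ for all large $G\in\Cc$, so every $\grad_r(\cdot)$ is subpolynomial over $\Cc$. It then remains to invoke the standard density characterisation of nowhere denseness from~\cite{NesetrilM11}: a class is nowhere dense precisely when each $\grad_r(\cdot)$ is subpolynomial over it. This is also what upgrades a merely subpolynomial clique-minor bound such as $\omega_r(G)\le\wcol_{2r+1}(G)$ to the finiteness of $\omega_r(\Cc)$ demanded by the definition, since a somewhere dense class has some $\grad_p(\cdot)$ growing essentially linearly along a subsequence of its graphs, which no subpolynomial bound can accommodate.

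For the reverse implication I would run the same chain in the opposite direction. Assuming $\Cc$ nowhere dense, the density characterisation again gives that each $\grad_s(\cdot)$ is subpolynomial over $\Cc$. Fix $r$ and $\eps$, and set $\delta=\eps/K(r)$, where $K(r)$ is the combined degree obtained by composing $\mathrm{adm}_r(G)\le P_2(\grad_r(G))$ with $\wcol_r(G)\le\mathrm{adm}_r(G)^{r}$. Applying subpolynomiality of $\grad_r$ at exponent $\delta$ then propagates through the two inequalities to give $\wcol_r(G)\le c\,n^{\eps}$ for all sufficiently large $G\in\Cc$, and enlarging $c$ absorbs the finitely many small graphs. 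This is exactly the quantitative refinement, in the nowhere-dense regime, of Zhu's argument for the bounded-expansion case in Theorem~\ref{thm:wcol-be}, where the very same inequalities are used, but with constant rather than subpolynomial bounds.

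The hard part, and the technical heart of the whole argument, is establishing the two upward inequalities $\mathrm{adm}_r(G)\le P_2(\grad_r(G))$ and $\wcol_r(G)\le\mathrm{adm}_r(G)^{r}$ with the degrees depending only on $r$. The second is proved by a branching argument: one builds a vertex ordering, typically extracted greedily from the admissibility witnesses, and bounds the number of weakly $r$-reachable vertices by organising the reachability paths into a bounded-degree branching process of depth $r$, which is precisely where the power $r$ enters. The admissibility bound requires relating the existence of many vertex-disjoint short paths leaving a vertex (the admissibility witnesses) to the density of a depth-$r$ minor obtained by contracting those paths; the delicate point is to perform this contraction without inflating the radius beyond $r$ and to keep the degree of $P_2$ independent of $n$. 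Once these two inequalities are in hand with $r$-dependent but $n$-independent degrees, together with $\grad_r(G)\le P_1(\wcol_{2r+1}(G))$ and the density characterisation of nowhere denseness, the subpolynomial bookkeeping of the exponents is routine and both directions follow.
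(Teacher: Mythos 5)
The paper never actually proves Theorem~\ref{thm:wcol-nd} --- it imports it from~\cite{NesetrilM11} --- so your proposal can only be measured against the proof in the literature, and your architecture is indeed that proof's architecture: polynomial interrelations between $\wcol_r$, admissibility, and $\grad_{O(r)}$ whose degrees depend on $r$ alone, followed by transfer of subpolynomial bounds, with the density (trichotomy) characterization of nowhere denseness invoked as a black box. The three inequalities you postulate are all genuine known results (for the middle one the depth on the right-hand side is some $\grad_{O(r)}$ rather than $\grad_r$ exactly, which is immaterial since the hypothesis controls every depth), the radius $2r+1$ in the first one arises exactly as you say, and the exponent bookkeeping in the implication ``nowhere dense $\Rightarrow\ \wcol_r(G)\le c\cdot n^{\eps}$'' is sound.

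The genuine gap is in the converse, at precisely the step you declare routine: invoking ``$\Cc$ is nowhere dense precisely when each $\grad_r(\cdot)$ is subpolynomial over $\Cc$.'' That characterization is correct only when the density of a depth-$r$ minor $H$ is normalized by $H$'s \emph{own} order; your chain instead produces $\grad_r(G)\le P_1(\wcol_{2r+1}(G))\le P_1(c\,n^{\eps})$, a bound normalized by the order $n$ of the \emph{host} graph $G$, and for classes not closed under taking subgraphs the two normalizations are inequivalent. Concretely, let $\Cc=\{K_k\sqcup P_{2^k}\colon k\in\N\}$, a $k$-clique together with a disjoint path on $2^k$ vertices. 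Then $\wcol_r(G)\le \max(k,r+1)$ and $\grad_r(G)\le (k-1)/2$ for the $k$th member $G$, both $O(\log n)+r$ where $n=|V(G)|$, hence subpolynomial in $n$ for every fixed $r$ and $\eps$; yet $\omega_0(\Cc)=\infty$, so $\Cc$ is somewhere dense. Thus the step ``host-order-subpolynomial $\grad_r$ over $\Cc$ implies nowhere dense'' is false, and no bookkeeping of exponents can repair your converse for arbitrary classes; this example in fact shows that the biconditional exactly as stated (by you, and verbatim by the paper) tacitly requires $\Cc$ to be monotone, or the bound to be demanded of all subgraphs of members of $\Cc$. The missing idea is a pruning-plus-closure step: if $K_k$ is a depth-$r$ minor of some $G\in\Cc$, prune each branch set to the at most $k-1$ paths of length at most $r$ joining its center to the connecting edges, obtaining a subgraph $G'\subseteq G$ on $O(rk^2)$ vertices that still contains $K_k$ as a depth-$r$ minor; for a monotone class $G'\in\Cc$, and applying your inequality $\grad_r(G')\le P_1(\wcol_{2r+1}(G'))$ together with the hypothesis on $G'$ --- whose order is now polynomial in $k$ --- contradicts $\grad_r(G')\ge (k-1)/2$ for large $k$. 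With that closure hypothesis made explicit and the pruning step added, your converse goes through; without them, it cannot.
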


Note that in Theorems~\ref{thm:wcol-be} and~\ref{thm:wcol-nd}, the vertex orderings witnessing the boundedness of weak $r$-coloring numbers for different $r\in \N$ may be different.
It is therefore natural to ask whether for every graph there exists a single vertex ordering for which all the weak $r$-coloring numbers are simultaneously small.
The following result of van den Heuvel and Kierstead~\cite{vdHeuvelK19} shows that this is indeed the case.

\begin{theorem}[\cite{vdHeuvelK19}]\label{thm:universal-ordering}
 For every graph $G$ there exists a vertex ordering $\sigma^\star$ of $G$ that satisfies
 $$\wcol_r(G,\sigma^\star)\leq (2^r+1)\cdot \wcol_{2r}(G)^{4r}\qquad\textrm{for all }r\in \N.$$
\end{theorem}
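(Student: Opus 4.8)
The plan is to construct a single ordering $\sigma^\star$ first, and then, for each radius $r$ \emph{separately}, to bound $\wcol_r(G,\sigma^\star)$ by comparing $\sigma^\star$ against an auxiliary ordering $\tau_r$ chosen to be optimal at the doubled radius, i.e.\ with $\wcol_{2r}(G,\tau_r)=\wcol_{2r}(G)$. The entire difficulty lies in the word ``single'': for each individual $r$ it is trivial to exhibit a good ordering (namely $\tau_r$ itself), but these orderings are mutually incompatible across different values of $r$, so the real task is to manufacture one linear order that is simultaneously good at every scale. The appearance of the doubled radius on the right-hand side, together with the shape of the constants $2^r$ and $\wcol_{2r}(G)^{4r}$, strongly suggests a path-splitting argument that is iterated roughly $r$ times.

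The basic tool I would use is the following ``meeting point'' observation. Suppose $v\in\WReach_r[G,\sigma^\star,u]$, witnessed by a path $P$ of length at most $r$ from $u$ to $v$ all of whose vertices are $\geq_{\sigma^\star} v$. Let $w$ be the $\tau_r$-smallest vertex of $P$. Then the sub-path of $P$ from $u$ to $w$ certifies $w\in\WReach_{r}[G,\tau_r,u]\subseteq \WReach_{2r}[G,\tau_r,u]$, while the sub-path from $v$ to $w$ certifies $w\in\WReach_{2r}[G,\tau_r,v]$. Thus every weakly $r$-reachable vertex $v$ is attached to a ``meeting point'' $w$ drawn from the set $\WReach_{2r}[G,\tau_r,u]$, whose size is at most $\wcol_{2r}(G)$. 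If for each fixed $w$ one could control the number of admissible $v$, a bound on $|\WReach_r[G,\sigma^\star,u]|$ would follow immediately.

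The obstacle --- and the reason $\sigma^\star$ must be built carefully rather than taken to be some single $\tau_r$ --- is precisely this last step: the number of vertices $v$ that reach a given $w$ within the doubled radius is \emph{not} controlled by $\wcol_{2r}(G)$, since ``inverse'' weak reachability can be large. My plan to overcome this is to iterate the splitting: recursively cut $P$ at its $\tau_r$-minimum, producing a Cartesian-tree decomposition of $P$ with respect to $\tau_r$, so that $v$ gets reconstructed from a chain of at most about $2r$ meeting points, each selected from a weakly $2r$-reachable set of size at most $\wcol_{2r}(G)$. Summing over the at most $2^r$ possible shapes of such a decomposition and multiplying the at most $2r$ per-level factors should assemble into the claimed bound $(2^r+1)\cdot\wcol_{2r}(G)^{4r}$. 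For this bookkeeping to close, the universal ordering $\sigma^\star$ must be engineered so that the recursive cuts are consistent with it at every scale at once; I would construct $\sigma^\star$ by refining the auxiliary orderings $\tau_1,\tau_2,\dots$ into one linear order --- for instance by sorting vertices lexicographically according to their ``scale profile'' across the $\tau_r$. The hardest part, which I expect to be the crux of the whole argument, is verifying that this single order makes the multi-scale recursion coherent, so that each level genuinely contributes only one factor of $\wcol_{2r}(G)$ and the inverse-reachability overcounting is tamed simultaneously for all $r$.
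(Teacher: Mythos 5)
You should first note that this paper does not prove Theorem~\ref{thm:universal-ordering} at all: it quotes it from van den Heuvel and Kierstead~\cite{vdHeuvelK19}, so your attempt has to stand on its own, and it does not. The decisive symptom is that your ``meeting point'' analysis never uses a single property of $\sigma^\star$: up to the point where you stall, the argument applies verbatim to an \emph{arbitrary} ordering, and the statement is false for arbitrary orderings (take $K_{1,n}$ with the center placed last: $\wcol_1(G,\sigma)=n+1$, while $\wcol_2(G)=2$, so no bound of the form $(2^r+1)\wcol_{2r}(G)^{4r}$ can hold). Hence what is missing is not bookkeeping but the entire content of the theorem. Moreover, the proposed Cartesian-tree recursion cannot supply it, because all the certificates it produces point the wrong way: if $w_0$ is the $\tau_r$-minimum of $P$ and $w_1$ is the minimum of the piece containing $v$, then the subpaths witness $w_1\in\WReach[G,\tau_r,v]$ and $w_0\in\WReach[G,\tau_r,w_1]$ --- i.e.\ each meeting point is weakly reachable \emph{from} the unknown vertex $v$ and from the deeper meeting points, never the converse. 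So to enumerate candidates for $v$ starting from $u$ you must invert weak reachability at every level of the tree; the recursion defers the inverse-reachability problem you correctly identified, it does not tame it, and no summation over the $2^r$ tree shapes closes the count.

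The second gap is that $\sigma^\star$ is never actually constructed. ``Sorting vertices lexicographically according to their scale profile across the $\tau_r$'' is vacuous as stated: each $\tau_r$ is already a total order on $V(G)$, so lexicographic refinement of $\tau_1$ by $\tau_2,\tau_3,\dots$ returns exactly $\tau_1$ --- an ordering optimized for $\wcol_2$ only, which can be arbitrarily bad at larger radii. The published proof in~\cite{vdHeuvelK19} rests on a genuinely different mechanism that nothing in your sketch replaces: it works with $r$-admissibility, whose key feature is that the admissibility of a vertex depends only on the \emph{set} of vertices placed after it (not on their relative order), so a single ordering built greedily suffix-by-suffix can be controlled at all radii simultaneously; the result is then transferred to weak coloring numbers via the standard comparisons $\mathrm{adm}_r\le\wcol_r$ and $\wcol_r\le$ (roughly) $\mathrm{adm}_r^{\Oh(r)}$, which is exactly where the factors $2^r$ and the exponent $4r$ come from. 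That suffix-monotonicity is the missing idea enabling a ``one ordering for all $r$'' construction. For what it is worth, your path-splitting observation itself is sound --- it is the same trick this paper uses to prove Theorem~\ref{thm:clustering} --- but it is a tool for transferring bounds between a graph and a quotient or between two given orderings, not for manufacturing the universal ordering whose existence is the whole point here.
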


\section{Clustering}\label{sec:clustering}

We now proceed to the main topic of this work: clustering properties of sparse graphs.

\begin{definition}
 A {\em{clustering}} of a graph $H$ is a partition $\clust$ of the vertex set of $H$ such that every block of $\clust$ is a clique in $H$.
 The {\em{quotient graph}} $\sfrac{H}{\clust}$ is a graph whose vertices are blocks of $\clust$, where blocks $A,B\in \clust$ are adjacent if and only if there exists $u\in A$ and $v\in B$
 such that $u$ and $v$ are adjacent in~$H$.
\end{definition}

The next statement is the main result of this paper.

\newcommand{\dhalf}{\lfloor d/2\rfloor}

\begin{theorem}\label{thm:clustering}
 Let $G$ be a graph, $\sigma$ be a vertex ordering of $G$, and $d\in \N$. Then there exists a clustering $\clust$ of $G^d$ that satisfies
 $$\wcol_r(\sfrac{G^d}{\clust})\leq \wcol_{2dr}(G,\sigma)\qquad \textrm{for all }r\in \N.$$
\end{theorem}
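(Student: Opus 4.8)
The plan is to build the clustering directly from the ordering $\sigma$ using a ``center'' map, and then to bound the weak coloring numbers of the quotient by transporting weakly reachable blocks to weakly reachable vertices in $G$. For each vertex $u$, let $c(u)$ denote the $\sigma$-minimum vertex of $\WReach_{\dhalf}[G,\sigma,u]$; since $u\in \WReach_{\dhalf}[G,\sigma,u]$, this is well defined, and $c(u)\le_\sigma u$. I take $\clust$ to be the partition of $V(G)$ into the fibers of $c$, so that two vertices lie in the same block exactly when they share a center. I first observe that each block is a clique in $G^d$: if $u,v$ belong to a block with common center $w=c(u)=c(v)$, then $\dist_G(u,w)\le \dhalf$ and $\dist_G(v,w)\le \dhalf$, whence $\dist_G(u,v)\le 2\dhalf\le d$. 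Thus $\clust$ is a genuine clustering of $G^d$. Because $c$ is constant on each block and distinct blocks have distinct centers, the map $A\mapsto c(A)$ is a well-defined injection from blocks to vertices; I order the blocks by putting $A<_\tau B$ iff $c(A)<_\sigma c(B)$.

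To estimate $\wcol_r(\sfrac{G^d}{\clust},\tau)$, I fix a block $A$ and aim to prove that every block $B$ with $B\in\WReach_r[\sfrac{G^d}{\clust},\tau,A]$ satisfies $c(B)\in\WReach_{2dr}[G,\sigma,c(A)]$; injectivity of $A\mapsto c(A)$ then bounds the number of such $B$ by $|\WReach_{2dr}[G,\sigma,c(A)]|\le \wcol_{2dr}(G,\sigma)$, and taking the maximum over $A$ finishes the theorem. So suppose $B$ is witnessed weakly $r$-reachable by a quotient path $A=A_0,A_1,\dots,A_k=B$ with $k\le r$ and $c(B)\le_\sigma c(A_j)$ for all $j$. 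Each quotient edge $A_{j-1}A_j$ comes from vertices $u_{j-1}\in A_{j-1}$ and $w_j\in A_j$ with $\dist_G(u_{j-1},w_j)\le d$; fix a shortest $G$-path $S_j$ between them. I assemble a walk in $G$ from $c(A_0)=c(A)$ to $c(A_k)=c(B)$ by gluing these segments together through the block centers: from $c(A_0)$ to $u_0$ along (the reverse of) the weak-reachability path realizing $c(A_0)=c(u_0)$; then $S_1$ to $w_1$; then $w_1\rightsquigarrow c(A_1)\rightsquigarrow u_1$ inside the clique $A_1$; and so on, ending with $w_k\rightsquigarrow c(A_k)=c(B)$. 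The within-block detours have length $\le 2\dhalf\le d$ and each $S_j$ has length $\le d$, so the total length is at most $2\dhalf+(2k-1)d\le 2kd\le 2dr$.

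The hard part is verifying that this walk stays $\ge_\sigma c(B)$, which is exactly what weak $2dr$-reachability of $c(B)$ from $c(A)$ demands. The glued within-block pieces are harmless: a weak-reachability path realizing $c(A_j)=c(z)$ stays $\ge_\sigma c(A_j)\ge_\sigma c(B)$, and likewise in reverse. The genuine obstacle is the shortest-path segments $S_j$, whose interiors are a priori uncontrolled and could dip below $c(B)$. The key observation that rescues this is the following: let $m$ be the $\sigma$-minimum vertex of $S_j$, and split $S_j$ at $m$; since $|S_j|\le d$, one of the two halves has length $\le \dhalf$, say the half towards $w_j$. That half is a path of length $\le \dhalf$ from $w_j$ to $m$ all of whose vertices are $\ge_\sigma m$, so $m\in \WReach_{\dhalf}[G,\sigma,w_j]$. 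Were $m<_\sigma c(B)$, then $m<_\sigma c(B)\le_\sigma c(A_j)=c(w_j)$ would contradict the minimality of $c(w_j)$ inside $\WReach_{\dhalf}[G,\sigma,w_j]$; the symmetric argument with $u_{j-1}$ and $A_{j-1}$ handles the other half. Hence every $S_j$ stays $\ge_\sigma c(B)$, the whole walk stays $\ge_\sigma c(B)$, and (using $c(B)\le_\sigma c(A)$) we conclude $c(B)\in \WReach_{2dr}[G,\sigma,c(A)]$, completing the bound.
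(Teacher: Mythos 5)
Your proposal is correct and takes essentially the same route as the paper's proof: the same clustering by fibers of the map sending $u$ to the $\sigma$-minimum of $\WReach_{\lfloor d/2\rfloor}[G,\sigma,u]$, the same quotient ordering by centers, and the same walk assembled from weak-reachability paths to centers together with the length-$\le d$ segments realizing quotient edges. Your splitting-at-the-$\sigma$-minimum argument for each segment $S_j$ is exactly the (slightly more explicit) justification behind the paper's claim that the first and last $\lfloor d/2\rfloor$ vertices of each connecting path cannot lie below the corresponding centers.
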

\begin{proof}
 For a vertex $u$, let $\ell(u)$ be the vertex of $\WReach_{\dhalf}[G,\sigma,u]$ that is the smallest in the ordering~$\sigma$.
 We define a clustering $\clust$ as follows. The blocks of $\clust$ are the equivalence classes of the following equivalence relation $\sim$ on the vertex set of $G$:
 $$u\sim v \qquad \Leftrightarrow \qquad \ell(u)=\ell(v).$$
 Observe that for all $u,v$ satisfying $u\sim v$, we have
 $$\dist(u,v)\leq \dist(u,\ell(u))+\dist(\ell(u),v)=\dist(u,\ell(u))+\dist(\ell(v),v)\leq \dhalf+\dhalf\leq d,$$
 hence $\clust$ is indeed a clustering of $G^d$.
 
 We now verify the claimed upper bound on the weak coloring numbers of $\sfrac{G^d}{\clust}$.
 For a block $A$ of~$\clust$, let $\ell(A)$ denote the common $\ell(u)$ for $u$ ranging over the elements of~$A$.
 Recall that the vertex set of $\sfrac{G^d}{\clust}$ is $\clust$.
 Then let us define a vertex ordering $\tau$ of $\sfrac{G^d}{\clust}$ as follows: for any $A,B\in \clust$, we put
 $$A <_\tau B\qquad \Leftrightarrow\qquad \ell(A)<_{\sigma}\ell(B).$$
 Note here that $\ell(A)\neq \ell(B)$ whenever $A\neq B$, hence $\tau$ is indeed a linear order on $\clust$.
 
 We now claim that for every $r\in \N$, we have $\wcol_r(\sfrac{G^d}{\clust},\tau)\leq \wcol_{2dr}(G,\sigma)$. For this, it suffices to show that for every $A\in \clust$ it holds that
 \begin{equation}\label{eq:wreach-bound}
|\WReach_r[\sfrac{G^d}{\clust},\tau,A]|\leq |\WReach_{2dr}[G,\sigma,\ell(A)]|,  
 \end{equation}
 because the right hand side is bounded by $\wcol_{2dr}(G,\sigma)$.
 
 Consider any $B\in \WReach_r[\sfrac{G^d}{\clust},\tau,A]$ and let $(C_0,C_1,\ldots,C_{p})$ be any path in $\sfrac{G^d}{\clust}$ from $C_0\coloneqq A$ to $C_p\coloneqq B$ which witnesses
 that $B\in \WReach_r[\sfrac{G^d}{\clust},\tau,A]$; that is, $p\leq r$ and $C_i\geq_\tau B$ for all $i\in \{0,\ldots,p\}$.
 For $i\in \{0,1,\ldots,p\}$, let $x_i=\ell(C_i)$.
 Since for each $i\in \{0,\ldots,p-1\}$, $C_i$ and $C_{i+1}$ are adjacent in $\sfrac{G^d}{\clust}$, we can find vertices $y_i\in C_i$ and $z_i\in C_{i+1}$ that are adjacent
 in $G^d$. That is, in $G$ there is a path $Q_i$ of length at most $d$ connecting $y_i$ with $z_i$.
 Observe that by the choice of $x_i=\ell(y_i)$ and $x_{i+1}=\ell(z_i)$, each vertex of $Q_i$ is not smaller than either $x_i$ or $x_{i+1}$ in $\sigma$: the first $\dhalf$ vertices of $Q_i$
 are not smaller than $x_i$, while the last $\dhalf$ vertices are not smaller than $x_{i+1}$.
 
 Further, by the definition of mapping $\ell(\cdot)$, we can find a path $P_i$ from $y_i$ to $x_i$ such that the length of $P_i$ is at most $\dhalf$ and all
 vertices of $P_i$ are not smaller than $x_i$ in $\sigma$. Similarly, there is a path $R_i$ from $z_i$ to $x_{i+1}$ such that the length of $R_i$ is at most $\dhalf$
 and all vertices of $R_i$ are not smaller than $x_{i+1}$ in $\sigma$. Thus, by concatenating paths
 $$P_0,Q_0,R_0,P_1,Q_1,R_1,\ldots,P_{p-1},Q_{p-1},R_{p-1}$$
 in order, we obtain a walk $W$ of length at most $p(d+2\dhalf)\leq 2dr$ connecting $x_0=\ell(A)$ with $x_p=\ell(B)$.
 As we argued, each vertex traversed by $W$ is not smaller in $\sigma$ than one of the vertices $x_0,x_1,\ldots,x_p$. 
 By the choice of $\tau$, the vertex $x_p$ is the smallest in $\sigma$ among $\{x_0,x_1,\ldots,x_p\}$.
 We conclude that all the vertices traversed by $W$ are not smaller than $x_p=\ell(B)$, hence $W$ witnesses that
 $$\ell(B)\in \WReach_{2rd}[G,\sigma,\ell(A)].$$
 Since vertices $\ell(B)$ are pairwise different for different $B\in \WReach_r[\sfrac{G^d}{\clust},\tau,A]$, this implies~\eqref{eq:wreach-bound} and concludes the proof.
\end{proof}

\begin{remark}
 The partition $\clust$ provided by Theorem~\ref{thm:clustering} has the following property:
 $$\bigcap_{a\in A}\WReach_{\dhalf}[G,\sigma,a]\neq \emptyset\qquad \textrm{for every }A\in \clust.$$
 Note that this property directly implies that every block of $\clust$ is a clique in $G^d$, which means that $\clust$ is indeed a clustering of $G^d$.
\end{remark}

Having established Theorem~\ref{thm:clustering}, we can formulate the key clustering property of powers of classes of bounded expansion. 

\begin{corollary}\label{thm:clustering-be}
 Let $\Cc$ be a class of graphs with bounded expansion and $d\in \N$.
 Then for every graph $H\in \Cc^d$ there exists a clustering $\clust_H$ such that the class $\{\sfrac{H}{\clust_H}\colon H\in \Cc^d\}$ also has bounded expansion.
\end{corollary}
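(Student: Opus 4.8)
The plan is to combine the three tools already in place: the clustering construction of Theorem~\ref{thm:clustering}, the characterization of bounded expansion through weak coloring numbers (Theorem~\ref{thm:wcol-be}), and the universal ordering of van den Heuvel and Kierstead (Theorem~\ref{thm:universal-ordering}). The overall strategy is to exhibit, for each $H\in\Cc^d$, a clustering whose quotient graph has weak $r$-coloring number bounded by a constant depending only on $r$, $d$, and $\Cc$; once this is done for every $r$, Theorem~\ref{thm:wcol-be} immediately gives that the resulting class has bounded expansion.

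First I would fix $H\in\Cc^d$ and write $H=G^d$ for some $G\in\Cc$. The delicate point is that Theorem~\ref{thm:clustering} produces, from a chosen vertex ordering $\sigma$ of $G$, a \emph{single} clustering $\clust$ that controls $\wcol_r(\sfrac{G^d}{\clust})$ for all $r$ at once, but the bound it delivers, namely $\wcol_{2dr}(G,\sigma)$, refers to that one fixed ordering. Since $r$ ranges over all of $\N$ while $d$ is fixed, the relevant depths $2dr$ are unbounded, so I need an ordering that is good at \emph{every} depth simultaneously. This is exactly what Theorem~\ref{thm:universal-ordering} supplies: applying it to $G$ yields an ordering $\sigma^\star$ with
$$\wcol_s(G,\sigma^\star)\leq (2^s+1)\cdot\wcol_{2s}(G)^{4s}\qquad\textrm{for all }s\in\N.$$

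Next I would make the right-hand side uniform over the whole class. Because $\Cc$ has bounded expansion, Theorem~\ref{thm:wcol-be} guarantees that $\wcol_{2s}(\Cc)$ is finite for every $s$, so $\wcol_{2s}(G)\leq\wcol_{2s}(\Cc)$ is bounded by a constant depending only on $s$ and $\Cc$, and not on the particular $G$. Consequently $\wcol_s(G,\sigma^\star)$ is bounded by a constant that is independent of the choice of $G\in\Cc$.

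Finally I would apply Theorem~\ref{thm:clustering} to $G$ with the ordering $\sigma^\star$, obtaining a clustering $\clust_H$ of $H=G^d$ satisfying
$$\wcol_r(\sfrac{H}{\clust_H})\leq\wcol_{2dr}(G,\sigma^\star)\leq (2^{2dr}+1)\cdot\wcol_{4dr}(\Cc)^{8dr}\qquad\textrm{for all }r\in\N.$$
The last quantity depends only on $r$, $d$, and $\Cc$, and not on $H$, so $\wcol_r\big(\{\sfrac{H}{\clust_H}\colon H\in\Cc^d\}\big)$ is finite for every $r$; Theorem~\ref{thm:wcol-be}, used in its reverse direction, then shows that $\{\sfrac{H}{\clust_H}\colon H\in\Cc^d\}$ has bounded expansion. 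The only genuine obstacle is the one flagged above: the clustering must be chosen once and for all, independently of $r$, while the weak coloring numbers must stay bounded at arbitrarily large depths. This is precisely resolved by feeding the universal ordering $\sigma^\star$ into Theorem~\ref{thm:clustering}; everything else is bookkeeping with the stated bounds, and no further combinatorial argument appears to be needed.
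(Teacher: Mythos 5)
Your proposal is correct and follows essentially the same route as the paper's proof: take the universal ordering $\sigma^\star$ from Theorem~\ref{thm:universal-ordering}, bound its weak coloring numbers uniformly over $\Cc$ via Theorem~\ref{thm:wcol-be}, feed it into Theorem~\ref{thm:clustering}, and invoke Theorem~\ref{thm:wcol-be} again in the reverse direction. The key subtlety you flag --- that the clustering must be fixed once while the depths $2dr$ are unbounded, which is exactly what the universal ordering resolves --- is precisely the point of the paper's argument, and your explicit bound $(2^{2dr}+1)\cdot\wcol_{4dr}(\Cc)^{8dr}$ is a correct instantiation of the constant $c(2rd)$ the paper leaves implicit.
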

\begin{proof}
 By Theorems~\ref{thm:wcol-be} and~\ref{thm:universal-ordering}, for every $p\in \N$ there is a constant $c(p)$ such that for every graph $G\in \Cc$ there exists a vertex
 ordering $\sigma^\star$ satisfying $\wcol_p(G,\sigma^\star)\leq c(p)$ for all $p\in \N$.
 Let $H=G^d\in \Cc^d$, where $G\in \Cc$.
 By applying Theorem~\ref{thm:clustering} to $G$, $\sigma^\star$, and $d$, we find a clustering $\clust_H$ of $H$
 satisfying
 $$\wcol_{r}(\sfrac{H}{\clust_H})\leq \wcol_{2rd}(G,\sigma^\star)\leq c(2rd)\qquad\textrm{for all }r\in \N.$$
 By Theorem~\ref{thm:wcol-be}, this implies that the class $\{\sfrac{H}{\clust_H}\colon H\in \Cc^d\}$ has bounded expansion.
\end{proof}

By replacing the application of Theorem~\ref{thm:wcol-be} with Theorem~\ref{thm:wcol-nd}, we can prove an analogous statement for nowhere dense classes in the same way.

\begin{corollary}\label{thm:clustering-nd}
 Let $\Cc$ be a nowhere dense class of graphs and $d\in \N$.
 Then for every graph $H\in \Cc^d$ there exists a clustering $\clust_H$ such that the class $\{\sfrac{H}{\clust_H}\colon H\in \Cc^d\}$ is also nowhere dense.
\end{corollary}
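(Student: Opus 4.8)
The plan is to follow the proof of Corollary~\ref{thm:clustering-be} almost verbatim, substituting the characterization of nowhere dense classes from Theorem~\ref{thm:wcol-nd} for the one of bounded expansion from Theorem~\ref{thm:wcol-be}. Concretely, I would first combine Theorems~\ref{thm:wcol-nd} and~\ref{thm:universal-ordering} to produce, for each $G\in\Cc$, a single ``universal'' vertex ordering whose weak coloring numbers of all orders are simultaneously small: since $\Cc$ is nowhere dense, for every $p\in\N$ and $\delta>0$ there is a constant $c$ with $\wcol_p(G)\le c\cdot n^\delta$ for every $n$-vertex $G\in\Cc$, and feeding this into Theorem~\ref{thm:universal-ordering} yields an ordering $\sigma^\star$ with $\wcol_p(G,\sigma^\star)\le (2^p+1)\,\wcol_{2p}(G)^{4p}\le C\cdot n^{4p\delta}$ for all $p$.

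Then, for $H=G^d\in\Cc^d$, I would apply Theorem~\ref{thm:clustering} to $G$, $\sigma^\star$, and $d$ to obtain a clustering $\clust_H$ of $H$ satisfying $\wcol_r(\sfrac{H}{\clust_H})\le \wcol_{2rd}(G,\sigma^\star)$ for all $r\in\N$. Chaining this with the estimate on $\sigma^\star$ gives, for every target exponent $\eps>0$ and after choosing $\delta$ small enough in terms of $r$, $d$, and $\eps$, a bound of the form $\wcol_r(\sfrac{H}{\clust_H})\le C'\cdot n^{\eps}$, where $n=|V(G)|=|V(H)|$ and $C'$ depends only on $r$, $d$, $\eps$, and $\Cc$. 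Finally I would invoke Theorem~\ref{thm:wcol-nd} in the reverse direction to conclude that the class $\{\sfrac{H}{\clust_H}\colon H\in\Cc^d\}$ is nowhere dense.

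The delicate point — and the step I expect to be the real obstacle — is this last one. Theorem~\ref{thm:wcol-nd} demands that each quotient $D=\sfrac{H}{\clust_H}$ satisfy $\wcol_r(D)\le c\cdot |V(D)|^\eps$ in terms of \emph{its own} number of vertices $N=|V(D)|$, whereas the chain above controls $\wcol_r(D)$ only by a small power of $n=|V(G)|$. Contracting the clusters can only decrease the vertex count, so $N\le n$, and this inequality points the wrong way: when the clustering collapses $G^d$ drastically (for instance, a star gives $N=1$ with $n$ unbounded), a bound of the form $C'n^\eps$ need not imply any bound of the form $cN^\eps$. To close this gap I would combine the polynomial bound with the trivial estimate $\wcol_r(D)\le N$ and argue that the problematic regime $N=n^{o(1)}$ is harmless, since heavy contraction is exactly what forces the weak coloring numbers of $D$ to stay small relative to $N$; alternatively, one can sidestep the issue by checking nowhere denseness through the minor-based definition, showing that a $K_t$ occurring as a depth-$r$ minor of $D$ pulls back to a $K_t$ occurring as a depth-$\mathcal{O}(dr)$ minor of $G$ — each block lies in a ball of radius $\dhalf$ around its representative $\ell(\cdot)$, and adjacent blocks have representatives at distance at most $2d$ — whence nowhere denseness of $\Cc$ caps $t$. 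In that variant the genuine difficulty merely migrates: one must ensure that the pulled-back branch sets are connected in $G$ (rather than only in $G^d$) and can be made pairwise disjoint.
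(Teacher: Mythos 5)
Your first two paragraphs are precisely the paper's proof: the paper disposes of Corollary~\ref{thm:clustering-nd} in a single sentence, namely that one reruns the proof of Corollary~\ref{thm:clustering-be} with Theorem~\ref{thm:wcol-nd} in place of Theorem~\ref{thm:wcol-be} --- combine Theorem~\ref{thm:wcol-nd} with Theorem~\ref{thm:universal-ordering} to get one ordering $\sigma^\star$ per graph $G\in\Cc$, feed it into Theorem~\ref{thm:clustering}, and read Theorem~\ref{thm:wcol-nd} right-to-left for the quotient class. So, measured against the paper, your proposal already contains the intended argument, and everything in your third paragraph is material the paper does not have.

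That said, the scruple you raise is legitimate, and the paper is silent on it. The chain gives $\wcol_r(\sfrac{H}{\clust_H})\le C\cdot n^{\eps}$ with $n=|V(G)|$, whereas the right-to-left direction of Theorem~\ref{thm:wcol-nd} applied to the quotient class needs a bound in terms of $N=|V(\sfrac{H}{\clust_H})|$, and this is not a cosmetic mismatch: in the regime $n\ge 2^N$ the derived inequality is vacuous, since $\wcol_r(\sfrac{H}{\clust_H})\le N\le \log_2 n\le C n^{\eps}$ holds for \emph{every} conceivable quotient, dense or not. Padding arguments of exactly this kind show that weak-coloring bounds measured against an inflated vertex count carry no information about nowhere denseness, so the gap must be closed by an actual argument --- and neither of your repairs does so. The first (``heavy contraction forces $\wcol_r$ of the quotient to stay small relative to $N$'') is an assertion, not a proof; nothing you write excludes quotients with $N=n^{o(1)}$ and $\wcol_r\ge N^{1/2}$, and the trivial bound $\wcol_r\le N$ is far too weak. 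The second founders exactly where you predict: the pulled-back branch sets (unions of blocks, or their sets of representatives $\ell(\cdot)$) are connected in $G^{2d}$ but generally not in $G$, and pairwise proximity of disjoint sets yields no clique minor whatsoever --- the leaves of a star are pairwise at distance $2$, yet a star, being a tree, has no $K_3$ minor. Any honest pullback must exploit the weak-reachability structure of the clustering (the paths from the vertices of a block to its representative stay $\sigma$-above that representative), and neither your sketch nor the paper carries this out. In short: you have reproduced the paper's proof, correctly located the one step where it is not fully justified, but not supplied a valid repair.
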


We conclude this section by discussing the algorithmic aspects of Corollaries~\ref{thm:clustering-be} and~\ref{thm:clustering-nd}.
As discussed in~\cite{vdHeuvelK19}, for any fixed class of bounded expansion $\Cc$ there is a polynomial-time algorithm that, given $G\in \Cc$, computes 
a vertex ordering $\sigma^\star$ of $G$ satisfying $\wcol_p(G,\sigma^\star)\leq c(p)$ for all $p\in \N$, where $c\colon \N\to \N$ is some function that depends on $\Cc$ only.
It is easy to see that given $G\in \Cc$ together with such an ordering~$\sigma^\star$, the clustering $\clust_H$ of $H=G^d$ provided by Corollary~\ref{thm:clustering-be} can be computed in polynomial time, 
by just a straightforward implementation of the construction used in the proof. 
By combining these two facts, we conclude that Corollary~\ref{thm:clustering-be} can be made algorithmic in the following sense: for a fixed class $\Cc$ of bounded expansion and $d\in \N$, given $G\in \Cc$ we can compute
a suitable clustering of $G^d$ in polynomial time.
A similar reasoning applies to Corollary~\ref{thm:clustering-nd} as well.
However, it is not clear if a suitable clustering of $H=G^d$ can be efficiently constructed if we are given only $H$ on input, instead of~$G$.
Note here that computing the ``pre-image'' graph $G$ given $H=G^d$ on input is far from being straightforward.
We leave finding such an efficient clustering algorithm as an open problem.

\section{Subchromatic index}

 The {\em{subchromatic index}} of a graph $H$, denoted $\chisub(H)$, is the least integer $c$ with the following property: the vertices of $H$ can be colored with $c$ colors so that the subgraph induced
 by each color is a disjoint union of complete graphs. Equivalently, $\chisub(H)$ is the minimum of $\chi(\sfrac{H}{\clust})$ for $\clust$ ranging over clusterings of $H$.
 
 The subchromatic index was introduced by Albertson et al.~\cite{AlbertsonJHL89}. 
 It is a notion of coloring that is tailored to graphs that are possibly dense, and hence may have a large (regular) chromatic number, but admit good clustering properties.
 As should be expected given the results of the previous section, this idea applies well to the powers of sparse graphs.
 
\begin{theorem}\label{thm:chisub-bounded}
 For every graph $G$ and $d\in \N$, we have
 $$\chisub(G^d)\leq \wcol_{2d}(G).$$
\end{theorem}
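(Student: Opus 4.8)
The plan is to unwind the definition of the subchromatic index and then feed it the clustering produced by Theorem~\ref{thm:clustering}. Recall that $\chisub(H)$ equals the minimum of $\chi(\sfrac{H}{\clust})$ over all clusterings $\clust$ of $H$. Hence to prove the bound it suffices to exhibit a \emph{single} clustering $\clust$ of $G^d$ whose quotient graph satisfies $\chi(\sfrac{G^d}{\clust})\leq \wcol_{2d}(G)$; the minimum defining $\chisub$ can then only be smaller.

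First I would fix a vertex ordering $\sigma$ of $G$ that achieves the weak $2d$-coloring number optimally, so that $\wcol_{2d}(G,\sigma)=\wcol_{2d}(G)$. Applying Theorem~\ref{thm:clustering} to $G$, this ordering $\sigma$, and the parameter $d$, I obtain a clustering $\clust$ of $G^d$ satisfying $\wcol_r(\sfrac{G^d}{\clust})\leq \wcol_{2dr}(G,\sigma)$ for every $r\in\N$. The crucial observation is that I only need this inequality at the single value $r=1$. Since $\wcol_1(\sfrac{G^d}{\clust})=\col(\sfrac{G^d}{\clust})$ (as noted after the definition of the weak coloring numbers), specializing to $r=1$ yields
$$\col(\sfrac{G^d}{\clust})=\wcol_1(\sfrac{G^d}{\clust})\leq \wcol_{2d}(G,\sigma)=\wcol_{2d}(G).$$

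To finish, I would invoke Lemma~\ref{lem:greedy}, which gives $\chi(\sfrac{G^d}{\clust})\leq \col(\sfrac{G^d}{\clust})$. Chaining this with the previous inequality produces $\chi(\sfrac{G^d}{\clust})\leq \wcol_{2d}(G)$, and combining with the variational characterization of $\chisub$ completes the argument. There is no real obstacle here: all the work has already been carried out in Theorem~\ref{thm:clustering}, and the only genuine step is recognizing that the value $r=1$ is exactly the one that converts a bound on the weak coloring number of the quotient into a bound on its ordinary coloring number, and hence, via the greedy coloring of Lemma~\ref{lem:greedy}, on its chromatic number. The factor of $2d$ in the exponent of $\wcol$ is inherited directly from the $2dr$ appearing in Theorem~\ref{thm:clustering} evaluated at $r=1$.
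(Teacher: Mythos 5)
Your proposal is correct and matches the paper's own proof essentially verbatim: both fix an ordering $\sigma$ optimal for $\wcol_{2d}$, apply Theorem~\ref{thm:clustering} at $r=1$ to get $\col(\sfrac{G^d}{\clust})\leq \wcol_{2d}(G)$, and conclude via $\chisub(G^d)\leq \chi(\sfrac{G^d}{\clust})\leq \col(\sfrac{G^d}{\clust})$ using Lemma~\ref{lem:greedy}. The only difference is that you spell out the specialization $\wcol_1=\col$ explicitly, which the paper leaves implicit.
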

\begin{proof}
 By applying Theorem~\ref{thm:clustering} to $G$ and a vertex ordering of $G$ with the minimum possible weak $2d$-coloring number, 
 we conclude that there exists a clustering $\clust$ of $G^d$ such that $\col(\sfrac{G^d}{\clust})\leq \wcol_{2d}(G)$.
 Then we have $\chisub(G^d)\leq \chi(\sfrac{G^d}{\clust})\leq \col(\sfrac{G^d}{\clust})\leq \wcol_{2d}(G)$.
\end{proof}

By combining Theorem~\ref{thm:chisub-bounded} with Theorems~\ref{thm:wcol-be} and~\ref{thm:wcol-nd}, 
we respectively obtain the following corollaries, which express upper bounds on the subchromatic index of powers of sparse graphs.

\begin{corollary}\label{cor:subchi-be}
 For every class of graphs $\Cc$ with bounded expansion and $d\in \N$, we have 
 $$\chisub(\Cc^d)<\infty.$$
\end{corollary}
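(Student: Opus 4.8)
The plan is to unpack the definition of the parameter $\chisub$ applied to a class and then feed each individual graph into the per-graph bound already established in Theorem~\ref{thm:chisub-bounded}. Recall from the preliminaries that for a graph parameter $\pi$ and a class $\Dd$ we set $\pi(\Dd)=\sup_{H\in \Dd}\pi(H)$, and that $\Cc^d=\{G^d\colon G\in \Cc\}$. Hence the quantity to be controlled is
$$\chisub(\Cc^d)=\sup_{H\in \Cc^d}\chisub(H)=\sup_{G\in \Cc}\chisub(G^d).$$

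The main step is to bound each term of this supremum uniformly. Fixing an arbitrary $G\in \Cc$, Theorem~\ref{thm:chisub-bounded} gives $\chisub(G^d)\leq \wcol_{2d}(G)$, where the integer $2d$ is a fixed constant once $d$ is fixed. Taking the supremum over all $G\in \Cc$ yields $\chisub(\Cc^d)\leq \sup_{G\in \Cc}\wcol_{2d}(G)=\wcol_{2d}(\Cc)$, which reduces the problem to controlling a single weak coloring number of the class $\Cc$ at the fixed distance $2d$.

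It then remains to argue that the right-hand side is finite, and this is precisely where the hypothesis that $\Cc$ has bounded expansion enters. By Theorem~\ref{thm:wcol-be}, bounded expansion is equivalent to $\wcol_r(\Cc)$ being finite for every $r\in \N$, and in particular for $r=2d$. Combining this with the inequality from the previous paragraph gives $\chisub(\Cc^d)\leq \wcol_{2d}(\Cc)<\infty$, which is the claimed conclusion.

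I do not anticipate a genuine obstacle here: the statement is a direct combination of the per-graph estimate of Theorem~\ref{thm:chisub-bounded} with the characterization of bounded expansion through weak coloring numbers in Theorem~\ref{thm:wcol-be}. The only point that warrants a moment's care is the bookkeeping of the distance parameter---namely that taking the $d$th power translates into a bound on the weak $2d$-coloring number---so that Theorem~\ref{thm:wcol-be} is invoked at the correct fixed value $r=2d$ rather than at $r=d$.
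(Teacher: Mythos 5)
Your proof is correct and is exactly the argument the paper intends: the paper derives Corollary~\ref{cor:subchi-be} by combining the per-graph bound $\chisub(G^d)\leq \wcol_{2d}(G)$ of Theorem~\ref{thm:chisub-bounded} with the characterization of bounded expansion via weak coloring numbers in Theorem~\ref{thm:wcol-be}, just as you do. Your explicit handling of the supremum and of the distance parameter $r=2d$ merely spells out what the paper leaves implicit.
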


\begin{corollary}\label{cor:subchi-nd}
 For every nowhere dense class of graphs $\Cc$, $d\in \N$, and $\eps>0$, there exists a constant $c$ such that 
 $$\chisub(G^d)\leq c\cdot n^\eps\qquad\textrm{for every }n\textrm{-vertex graph }G\in \Cc.$$
\end{corollary}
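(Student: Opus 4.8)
The plan is to obtain this bound by composing the two tools already in place: the pointwise inequality of Theorem~\ref{thm:chisub-bounded}, which bounds the subchromatic index of a $d$th power by a weak coloring number of the base graph, and the quantitative characterization of nowhere denseness in terms of weak coloring numbers supplied by Theorem~\ref{thm:wcol-nd}. Since Theorem~\ref{thm:chisub-bounded} already reduces control of $\chisub(G^d)$ to control of $\wcol_{2d}(G)$, the whole statement follows once we feed in the growth bound that nowhere denseness guarantees for the latter quantity.

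Concretely, I would first fix the nowhere dense class $\Cc$, the integer $d\in\N$, and the real $\eps>0$ from the statement, and then apply Theorem~\ref{thm:wcol-nd} with the radius parameter set to $r=2d$ and with the same $\eps$. This yields a constant $c\in\N$, depending only on $\Cc$, $d$, and $\eps$, such that every $n$-vertex graph $G\in\Cc$ satisfies $\wcol_{2d}(G)\leq c\cdot n^\eps$. Invoking Theorem~\ref{thm:chisub-bounded} for each such $G$ gives $\chisub(G^d)\leq \wcol_{2d}(G)$, and chaining the two inequalities produces
$$\chisub(G^d)\leq \wcol_{2d}(G)\leq c\cdot n^\eps,$$
exactly as required.

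There is no genuine obstacle left at this level, as all the difficulty has been absorbed into the earlier results: the substantive content lies in Theorem~\ref{thm:chisub-bounded}, which in turn rests on the clustering construction of Theorem~\ref{thm:clustering} and the greedy bound $\chi\leq\col$ of Lemma~\ref{lem:greedy}, and in the characterization Theorem~\ref{thm:wcol-nd}. The only points worth double-checking are purely bookkeeping: that the radius $2d$ demanded by Theorem~\ref{thm:chisub-bounded} is precisely the radius one plugs into Theorem~\ref{thm:wcol-nd}, and that the constant $c$ is allowed to depend on $d$ (together with $\Cc$ and $\eps$) but must be independent of the individual graph $G$ and of its size $n$. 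Once these dependencies are aligned, the corollary is immediate.
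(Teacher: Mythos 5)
Your proof is correct and matches the paper's own argument exactly: the paper derives Corollary~\ref{cor:subchi-nd} by combining Theorem~\ref{thm:chisub-bounded} (which gives $\chisub(G^d)\leq \wcol_{2d}(G)$) with the nowhere-dense characterization of Theorem~\ref{thm:wcol-nd} applied at radius $r=2d$, precisely as you do. Your bookkeeping remarks about the dependence of $c$ on $\Cc$, $d$, and $\eps$ are also accurate, so nothing is missing.
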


We find Corollary~\ref{cor:subchi-nd} particularly interesting due to potential connections with low shrubdepth colorings in powers of nowhere dense classes, which we already briefly mentioned in Section~\ref{sec:intro}.
Without going into technical details, shrubdepth, introduced in~\cite{GanianHNOM19}, is a depth parameter suited for the treatment of dense graphs; it can be regarded as a depth-constrained variant of cliquewidth.
Kwon et al.~\cite{KwonPS20} proved that whenever a class $\Cc$ has bounded expansion and $d\in \N$, the class $\Cc^d$ has {\em{low shrubdepth colorings}}\footnote{We remark 
that Kwon et al.~\cite{KwonPS20} only discuss bounds on the cliquewidth of graphs induced by subsets of colors, but, as noted in~\cite{GajarskyKNMPST18}, their proof actually provides bounds on their shrubdepth.}.
That is, for every $p\in \N$ there exists $M(p)\in \N$ such that every graph from $\Cc^d$ can be colored with $M(p)$ colors so that every subset of $p$ colors induces a graph of bounded shrubdepth.
This result was then lifted by Gajarsk\'y et al.~\cite{GajarskyKNMPST18} to all classes of {\em{structurally bounded expansion}}, that is, images of classes of bounded expansion under 
one-dimensional first-order interpretations.
The result of Kwon et al.~\cite{KwonPS20} suggests that in powers of nowhere dense classes, one should expect low shrubdepth colorings with $n^\eps$ colors, rather than a constant,
as phrased formally in the following conjecture.

\begin{conjecture}\label{cnj:low-shrb}
 Let $\Cc$ be a nowhere dense class of graphs and $d\in \N$.
 Then for every $p\in \N$ there exists a class of graphs $\Dd_p$ of bounded shrubdepth and, for every $\eps>0$, a constant $c_{p,\eps}$ such that
 every $n$-vertex graph $H\in \Cc^d$ admits a coloring with at most $c_{p,\eps}\cdot n^\eps$ colors in which every subset of at most $p$ colors induces in $H$ a subgraph that belongs to $\Dd_p$.
\end{conjecture}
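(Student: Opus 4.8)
The plan is to reduce the conjecture to two well-understood ingredients. The first is that nowhere dense classes admit \emph{low treedepth colorings with $n^\eps$ colors}: for every $q\in\N$ and $\eps>0$ there is a constant such that every $n$-vertex $G\in\Cc$ can be colored with $O(n^\eps)$ colors in which any $q$ color classes induce a subgraph of treedepth bounded by a function of $q$ alone (see, e.g.,~\cite{sparsity}). The second is that bounded shrubdepth is preserved under any fixed first-order transduction~\cite{GanianHNOM19}, while the $d$-th power operation $G\mapsto G^d$ is itself such a transduction, since $\dist_G(u,v)\le d$ is first-order expressible with quantifier rank depending only on $d$. The strategy is therefore to produce, for each $p\in\N$, a coloring of $H=G^d$ with $c_{p,\eps}\cdot n^\eps$ colors such that the subgraph induced by any $p$ of the color classes can be obtained, by a single fixed transduction $I_p$, from a graph whose treedepth is bounded by a function $t(p,d)$. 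Since bounded treedepth implies bounded shrubdepth and the latter is preserved by $I_p$, the image under $I_p$ of all graphs of treedepth at most $t(p,d)$ is a class $\Dd_p$ of bounded shrubdepth independent of $n$, which is exactly the object the statement asks for.

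To build the coloring I would combine the clustering machinery of this paper with a low treedepth coloring of the sparse backbone. First, fix a universal vertex ordering $\sigma^\star$ of $G$ obtained by combining Theorems~\ref{thm:wcol-nd} and~\ref{thm:universal-ordering}, so that $\wcol_r(G,\sigma^\star)\le c_r\cdot n^\eps$ for all $r$, and apply Theorem~\ref{thm:clustering} to obtain a clustering $\clust$ of $H$ whose quotient $\sfrac{H}{\clust}$ is nowhere dense, as in Corollary~\ref{thm:clustering-nd}. The blocks of $\clust$ are cliques, and by the remark following Theorem~\ref{thm:clustering} each block $A$ comes with a certifying vertex $\ell(A)$ that is weakly $\dhalf$-reachable from every element of $A$. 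Since $\sfrac{H}{\clust}$ is nowhere dense, for a suitable parameter $q=q(p,d)$ it admits a low treedepth coloring with $O(n^\eps)$ colors in which any $q$ color classes induce bounded treedepth. Pulling this block coloring back to $V(H)$ along the quotient map, and superimposing a bounded number of auxiliary $O(n^\eps)$-colorings that record, for each vertex, the relevant weakly reachable vertices in $\sigma^\star$, yields a coloring of $H$ with $O(n^\eps)$ colors in total.

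The hard part will be to prove that $p$ color classes of $H$ indeed induce a transduction of a bounded treedepth graph, and this is the reason the statement is posed as a conjecture. The difficulty is that an edge of $H=G^d$ between two selected vertices is witnessed by a path of length at most $d$ in $G$ whose interior may traverse vertices of arbitrary colors, and in particular may leave the chosen $p$ classes entirely; thus, unlike for induced powers, the graph $H[S]$ on a color-restricted set $S$ is not determined by any bounded-treedepth subgraph spanned by $S$ alone. The plan to overcome this is to use the clustering to reroute every short connecting path through the certifying vertices $\ell(\cdot)$ of the blocks it meets, exactly as in the proof of Theorem~\ref{thm:clustering}, so that the adjacency pattern inside $S$ becomes governed by the nowhere dense quotient $\sfrac{H}{\clust}$ together with a bounded amount of local data attached to each block --- data that a sufficiently rich low treedepth coloring, with $q$ chosen large in terms of $p$ and $d$, should expose to the fixed transduction $I_p$. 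Making this rerouting uniform across the whole class, so that one transduction and one bounded shrubdepth class $\Dd_p$ serve all $H\in\Cc^d$ simultaneously, is the crux; it is precisely the point at which the bounded expansion argument of Kwon et al.~\cite{KwonPS20} must be strengthened to tolerate the $n^\eps$ many colors that the nowhere dense regime permits.
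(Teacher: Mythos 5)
You should first note that the statement you were asked to prove is posed in the paper as a \emph{conjecture}: the paper does not prove it, explicitly states that the argument of Kwon et al.~\cite{KwonPS20} for bounded expansion does not lift to the nowhere dense setting, and only establishes the case $p=1$ (via Corollary~\ref{cor:subchi-nd}, since cluster graphs have bounded shrubdepth). So there is no paper proof to compare against, and your proposal --- which candidly labels its central step as ``the hard part'' and ``the crux'' --- is not a proof either. It is a research plan with a genuine, self-acknowledged gap, and the gap sits exactly where the conjecture is hard.

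Concretely, the missing step is the claim that any $p$ color classes of $H=G^d$ induce a graph obtainable by a \emph{fixed} transduction $I_p$ from a graph of treedepth bounded in terms of $p$ and $d$ only. Your rerouting idea (replacing a short $G$-path witnessing an edge of $H$ by a detour through the vertices $\ell(\cdot)$ of the blocks of $\clust$, as in the proof of Theorem~\ref{thm:clustering}) correctly identifies the obstruction --- witnesses of adjacency leave the selected color classes --- but does not resolve it. The quotient $\sfrac{H}{\clust}$ being nowhere dense (Corollary~\ref{thm:clustering-nd}) only tells you \emph{which pairs of blocks} contain at least one edge of $H$; it says nothing about the bipartite adjacency pattern between two adjacent blocks, and it is this pattern that the transduction must reconstruct from per-vertex labels. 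In the bounded expansion regime the relevant ``connection data'' for a vertex $u$ (roughly, which vertices of $\WReach_{\dhalf}[G,\sigma^\star,u]$ it reaches and at which distances) ranges over a domain of size bounded by $\wcol_{2d}(\Cc)$, a constant, so it can be absorbed into finitely many unary predicates read by a fixed $I_p$. In the nowhere dense regime that domain has size only bounded by $O(n^{\eps})$, so this data cannot be encoded in the bounded signature of a fixed transduction; allowing $c_{p,\eps}\cdot n^{\eps}$ colors in the coloring does not help, because after restricting to $p$ classes the target class $\Dd_p$ must be a single class independent of $n$. Until you exhibit a bounded-size invariant per vertex that determines adjacency within any $p$ classes --- which is precisely what the paper's Theorem~\ref{thm:chisub-bounded-2} achieves for the much weaker $p=1$ case through the triple $(\alpha_u,\beta_u,\gamma_u)$ --- the reduction to preservation of shrubdepth under transductions does not get off the ground.
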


Unfortunately, the proof of Kwon et al.~\cite{KwonPS20} for powers of classes of bounded expansion does not lift to nowhere dense classes, hence Conjecture~\ref{cnj:low-shrb} remains open.
However, since the class of {\em{cluster graphs}} --- disjoint unions of complete graphs --- has bounded shrubdepth, Corollary~\ref{cor:subchi-nd} actually proves Conjecture~\ref{cnj:low-shrb} for the case $p=1$.
This may potentially point to a new, different approach to Conjecture~\ref{cnj:low-shrb}.

\medskip

Theorem~\ref{thm:chisub-bounded} provides an upper bound on the subchromatic number of the $d$th power of a graph expressed in terms of its weak $2d$-coloring number.
It is natural to ask whether for such a bound, we could rely on weak coloring numbers for smaller distances than $2d$.
We now show that this is indeed the case: the boundedness of the weak $d$-coloring number is sufficient for bounding the subchromatic number of the $d$th power.

\begin{theorem}\label{thm:chisub-bounded-2}
 For every $d\in\N$ there exists a function $f\colon \N\to \N$ such that for every graph $G$ we have
 $$\chisub(G^d)\leq f(\wcol_d(G)).$$
\end{theorem}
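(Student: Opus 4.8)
The plan is to exploit the extra flexibility that the subchromatic number has over the bound of Theorem~\ref{thm:chisub-bounded}. A color class is admissible exactly when it induces a disjoint union of cliques, equivalently when it induces no $P_3$ (path on three vertices) in $G^d$. So I restate the goal: fixing a vertex ordering $\sigma$ with $\wcol_d(G,\sigma)=\wcol_d(G)=:k$, I want to color $V(G)$ with $f(k)$ colors so that $G^d$ contains no monochromatic induced $P_3$, i.e.\ no monochromatic triple $x,y,z$ with $\dist(x,y)\le d$, $\dist(y,z)\le d$, and $\dist(x,z)>d$. The reason a genuinely new idea is needed, rather than a direct appeal to Theorem~\ref{thm:clustering}, is that the clustering produced there is a \emph{single} global partition, and properly coloring the quotient costs $\wcol_{2d}$: the walk certifying adjacency of two blocks has length up to $\dhalf+d+\dhalf$. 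Subcoloring allows different color classes to use different cliques, and this is what should let us descend to a bound depending only on $\wcol_d$.

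First I would set up the local data. For a vertex $v$ let $\ell(v)=\min\WReach_{\dhalf}[G,\sigma,v]$, exactly as in the proof of Theorem~\ref{thm:clustering}; recall that $\ell(u)=\ell(v)$ forces $\dist(u,v)\le d$. Then I attach to each $v$ a bounded \emph{type} $\theta(v)$: the isomorphism type of the structure carried by $\WReach_d[G,\sigma,v]$, equipped with the restriction of $\sigma$, with every pairwise $G$-distance truncated at $d+1$, and with the two distinguished vertices $v$ and $\ell(v)$ marked. Since $|\WReach_d[G,\sigma,v]|\le k$, the number of such types is bounded by a function $f(k)$ of $k$ and $d$, and I color each $v$ by $\theta(v)$. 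Selecting the right notion of type --- rich enough to constrain the geometry yet still bounded --- is itself part of the difficulty, since $\WReach_d[G,\sigma,v]$ only records the structure $\sigma$-below $v$.

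The heart of the argument is to show that each color class is $P_3$-free in $G^d$. Suppose for contradiction that $x,y,z$ is a monochromatic $P_3$ with $\dist(x,z)>d$. Using shortest paths I would locate the $\sigma$-minimal apexes $a$ on an $x$--$y$ path and $b$ on a $y$--$z$ path, so that $a\in\WReach_d[x]\cap\WReach_d[y]$ and $b\in\WReach_d[y]\cap\WReach_d[z]$; the minimality argument from Theorem~\ref{thm:clustering} --- if an apex were $\sigma$-below $\ell$ of the far endpoint, the short half of the path would place it in that endpoint's $\dhalf$-reach set, a contradiction --- controls how $\ell(x),\ell(y),\ell(z)$ and the apexes interleave in $\sigma$. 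I stress that the contradiction cannot simply be ``$x$ and $z$ share the same $\ell$'': the stronger claim that same type plus $d$-closeness forces equal $\ell$ would properly color the quotient with $f(k)$ colors, which is impossible in general, so the cliques realized inside a type class must differ from the $\ell$-classes. The aim is instead to use that $x$ and $z$ carry the \emph{same} marked type, as seen from $y$, to reroute the two length-$\le d$ witnesses into a single $x$--$z$ walk of length at most $d$.

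The main obstacle is exactly this rerouting. Naively concatenating the $x$--$y$ and $y$--$z$ witnesses yields a walk of length up to $2d$, which is precisely the bottleneck pinning Theorem~\ref{thm:chisub-bounded} at distance $2d$; the whole point is to avoid paying for it twice. The leverage must come from the type equality: because the ordered, distance-labeled, $\ell$-marked $\dhalf$-reach structure around $x$ is isomorphic to that around $z$, the common anchor through which $y$ reaches $x$ should be matched to a corresponding anchor for $z$ within distance $\dhalf$, so that $x$ and $z$ are both within $\dhalf$ of a common vertex and hence at distance $\le d$. Turning this abstract type-isomorphism into an honest path in $G$, and checking that the weak-reachability constraints are met simultaneously along it, is where the real work lies; it is also the reason the resulting $f$ is best left as an unspecified function of $\wcol_d(G)$ rather than an explicit bound.
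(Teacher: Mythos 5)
Your high-level plan --- color vertices by a bounded amount of data read off from their weak reach sets, then show that a monochromatic $P_3$ in $G^d$ is impossible --- is indeed the paper's strategy, and your target contradiction (forcing an apex into $\WReach_{\dhalf}$ of both endpoints of the non-adjacent pair, which gives $\dist\le d$) is exactly the one the paper derives. But the specific type you propose cannot carry the argument, and the step you defer (``turning this abstract type-isomorphism into an honest path in $G$'') is not a technical verification: it is the missing idea, and with your types it fails. The isomorphism type of $\WReach_d[G,\sigma,v]$ --- even enriched with $\sigma$, truncated distances, and the marks $v,\ell(v)$ --- is purely \emph{internal} to that set. Two vertices $x,z$ can have identical such types while their reach sets share no vertices at all, or share vertices sitting at completely unrelated positions of the two structures. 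In particular, when the $\sigma$-minimal apex $a$ of a shortest $x$--$y$ path lies in $\WReach_{d_1}[x]\cap\WReach_{d_2}[y]$ with $d_1\le \dhalf$, nothing in your coloring lets you conclude that the vertex corresponding to $a$ under the abstract isomorphism between the structure of $x$ (or $y$) and that of $z$ is the vertex $a$ itself, or even a vertex near it: an isomorphism of labeled structures carries no information about how the two reach sets actually overlap in $G$. The rerouting you hope for requires identifying \emph{actual common vertices} across different vertices' reach sets, which internal isomorphism data cannot provide.

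The paper supplies this rigidity with one extra ingredient you are missing: an auxiliary greedy weak $d$-coloring $\lambda$ of $G$ over $\sigma$, using $c=\wcol_d(G)$ colors, in which every vertex receives a color distinct from those of all other vertices of its $\WReach_d$ set. This yields the key rigidity property: if $x\in\WReach_{d_2}[v]$ and $x'\in\WReach_{d_1}[v]$ with $d_1+d_2\le d$ and $\lambda(x)=\lambda(x')$, then concatenating the witnessing paths through $v$ shows that one of $x,x'$ weakly $d$-reaches the other, forcing $x=x'$. The paper's type of $u$ records, for $\WReach_{\dhalf}[u]$ listed in $\sigma$-order, the $\lambda$-colors ($\alpha_u$), the exact reach radii ($\beta_u$), and indices locating each element inside $\WReach_d$ of the later elements ($\gamma_u$); since all three components refer to the single global coloring $\lambda$ and to positions in concrete reach sets, equality of types lets one equate actual vertices, not abstractly corresponding ones. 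That is precisely what makes the apex-transfer step (the apex of the $u$--$v$ path lands in $\WReach_{\dhalf}[u]\cap\WReach_{\dhalf}[v]$) and the final step ($x\in\WReach_{\dhalf}[w]$ via the $\gamma$ indices) go through. To repair your write-up, replace the isomorphism type by this $\lambda$-referenced data; as it stands, the proof has a genuine gap at its central step.
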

\begin{proof}
 Let $c=\wcol_d(G)$ and let $\sigma$ be a vertex ordering of $G$ with optimum $\wcol_d(G,\sigma)$.
 For brevity, from now on we write $\WReach_d[u]$ for $\WReach_d[G,\sigma,u]$, etc.
 Let $\lambda$ be a greedy weak $d$-coloring of $G$ that uses $c$ colors, computed over $\sigma$:
 every consecutive vertex $u$ receives a color that is different from all the colors assigned to the other vertices of $\WReach_d[u]$.
 We assume that $\lambda$ uses colors $\{1,\ldots,c\}$.
 
 For any set $X\subseteq V(G)$ and $x\in X$, the {\em{index}} of $x$ in $X$ is its index in $X$ ordered by $\sigma$, which is a number between $1$ and $|X|$.
 Consider any $u\in V(G)$ and let $x_1,\ldots,x_p$ be the set $\WReach_{\dhalf}[u]$ ordered by $\sigma$; then $p\leq c$ and $x_p=u$.
 Note that for all $1\leq i<j\leq p$, we have $x_i\in \WReach_d[x_j]$, hence vertices $x_1,\ldots,x_p$ receive pairwise different colors under $\lambda$.
 With $u$ we associate a triple of functions $\xi(u)=(\alpha_u,\beta_u,\gamma_u)$ defined as follows:
 \begin{itemize}
  \item $\alpha_u\colon \{1,\ldots,p\}\to \{1,\ldots,c\}$ is defined as $\alpha_u(i)=\lambda(x_i)$;
  \item $\beta_u\colon \{1,\ldots,p\}\to \{0,1,\ldots,\dhalf\}$ is defined as follows: $\beta_u(i)$ is the smallest $r$ such that $x_i\in \WReach_{r}[u]$;
  \item $\gamma_u\colon \{i,j\in \{1,\ldots,p\}\colon i<j\}\to \{1,\ldots,c\}$ is defined as follows: $\beta_u(i,j)$ is the index of $x_i$ in $\WReach_d[x_j]$.
 \end{itemize}
 Observe that the cardinality of the co-domain of the mapping $\xi$ is bounded by a function of $c$ and $d$. 
 Hence, it suffices to prove that $\xi$ is a subcoloring of $G^d$, that is, for every triple of functions $(\alpha,\beta,\gamma)$ as above,
 the set of those $u$ that satisfy $\xi(u)=(\alpha,\beta,\gamma)$ induces in $G^d$ a disjoint union of complete graphs.
 This is equivalent to the following claim: there are no three vertices $u,v,w$ with same color under $\xi$ that would induce a $P_3$ in $G^d$.
 
 Suppose such three vertices exist; that is,
 \begin{itemize}
  \item $\dist(u,v)\leq d$, $\dist(v,w)\leq d$, and $\dist(u,w)>d$; and
  \item $\xi(u)=\xi(v)=\xi(w)=(\alpha,\beta,\gamma)$ for some functions $\alpha,\beta,\gamma$.
 \end{itemize}
 Let $P_{uv}$ be a shortest path connecting $u$ and $v$, and let $P_{vw}$ be a shortest path connecting $v$ and~$w$.
 
 Let $x$ be the vertex on $P_{uv}$ that is the smallest in $\sigma$.
 Then $x$ splits $P_{uv}$ into two subpaths, say of length $d_1$ and $d_2$, which witness that $x\in \WReach_{d_1}[u]$ and $x\in \WReach_{d_2}[v]$, respectively.
 Note that $d_1+d_2\leq d$. Suppose for a moment that $d_1\leq d_2$; then $d_1\leq \dhalf$. 
 Since $\alpha_u=\alpha_v=\alpha$ and $\beta_u=\beta_v=\beta$, there exists a vertex $x'\in \WReach_{d_1}[v]$ such that $\lambda(x)=\lambda(x')$.
 As $x\in \WReach_{d_2}[v]$, $x'\in \WReach_{d_1}[v]$ and $d_1+d_2\leq d$, we conclude that either $x\in \WReach_d[x']$ or $x'\in \WReach_d[x]$.
 But $\lambda(x)=\lambda(x')$, hence $x=x'$.
 In particular, $x\in \WReach_{d_1}[u]\cap \WReach_{d_1}[v]\subseteq \WReach_{\dhalf}[u]\cap \WReach_{\dhalf}[v]$.
 The other case, when $d_1\geq d_2$, is symmetric, so we may conclude that in both cases, 
 $$x\in \WReach_{\dhalf}[u]\cap \WReach_{\dhalf}[v].$$
 
 Letting $y$ be the vertex on $P_{vw}$ that is the smallest in $\sigma$, we may apply the same reasoning as above to conclude that
 $$y\in \WReach_{\dhalf}[v]\cap \WReach_{\dhalf}[w].$$
 Since $\dist(u,w)>d$, we have $x\neq y$. By symmetry we may assume that $x<_\sigma y$.
 As $x,y\in \WReach_{\dhalf}[v]$, we have $x\in \WReach_d[y]$.
 Let $i<j$ be the indices of $x$ and $y$ in $\WReach_{\dhalf}[v]$, respectively.
 
 Observe that $y$ is the only vertex of color $\lambda(y)$ in $\WReach_{\dhalf}[v]$, and similarly $y$ is the only vertex of color $\lambda(y)$ in $\WReach_{\dhalf}[w]$.
 Since $\alpha_v=\alpha_w=\alpha$, we infer that the index of $y$ in $\WReach_{\dhalf}[w]$ is also $j$.
 Since $\gamma_v=\gamma$, the index of $x$ in $\WReach_d[y]$ is $\gamma(i,j)$.
 However, as $\gamma_w=\gamma$ as well, the vertex of index $i$ in $\WReach_{\dhalf}[w]$ is exactly the vertex of index $\gamma(i,j)$ in $\WReach_d[y]$ --- which is~$x$.
 Thus $x\in \WReach_{\dhalf}[w]$, which combined with $x\in \WReach_{\dhalf}[u]$ implies that $\dist(u,w)\leq d$, a contradiction.
\end{proof}

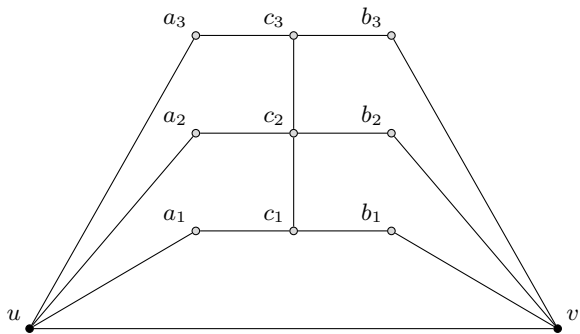
\begin{figure}[t]
\begin{tikzpicture}[scale=1.3]
\small
   \tikzstyle{nvertex}=[circle,draw=black,fill=gray!30,minimum size=0.1cm,inner sep=0pt]
   \tikzstyle{vertex}=[circle,draw=black,fill=black,minimum size=0.1cm,inner sep=0pt]

   \foreach \n/\x/\y in {a1/1/1, a2/1/2, a3/1/3, c1/2/1, c2/2/2, c3/2/3, b1/3/1, b2/3/2, b3/3/3} {
     \node[nvertex] (\n) at (\x, \y) {};
   }
   \foreach \n/\m in {a1/a_1, a2/a_2, a3/a_3}{
     \node[above left] at (\n) {$\m$};
   }
   \foreach \n/\m in {b1/b_1, b2/b_2, b3/b_3}{
     \node[above left] at (\n) {$\m$};
   }
   \foreach \n/\m in {c1/c_1, c2/c_2, c3/c_3}{
     \node[above left] at (\n) {$\m$};
   }
   
   \foreach \n/\x/\y in {u/-0.7/0, v/4.7/0} {
     \node[vertex] (\n) at (\x, \y) {};
   }
   \node[above left] at (u) {$u$};
   \node[above right] at (v) {$v$};
   \foreach \n/\m in {a1/c1, a2/c2, a3/c3, b1/c1, b2/c2, b3/c3, c1/c2, c2/c3, u/a1, u/a2, u/a3, v/b1, v/b2, v/b3, u/v} {
     \draw (\n) -- (\m);
   }
   
\end{tikzpicture}
	\caption{Gadget used in the construction of a planar graph $G$ such that $\chisub(G^2)=5$.}
	\label{fig:5gadget}
\end{figure}

Theorems~\ref{thm:chisub-bounded} and~\ref{thm:chisub-bounded-2} suggest an interesting research direction: 
finding tight bounds on the subchromatic number of powers of concrete classes of sparse graphs.
For instance, for the class of planar graphs, the following bounds on weak coloring numbers were proved in~\cite{HeuvelMQRS17}:
$$\wcol_d(\mathrm{Planar})\leq \binom{d+2}{2}(2d+1)\qquad\textrm{for all }d\in \N.$$
By combining this bound with Theorem~\ref{thm:chisub-bounded} applied for $d=2$, we conclude that
$$\chisub(\mathrm{Planar}^2)\leq 135.$$
On the other hand, the highest complementary lower bound known to us is $\chisub(\mathrm{Planar}^2)\geq 5$.
An example for this can be obtained by taking a highly branching tree with high depth, and attaching to every edge $uv$ the gadget depicted in Figure~\ref{fig:5gadget}.
We leave the easy verification that the square of the graph obtained in this way has subchromatic number equal to $5$ to the reader.
Finding tighter upper and lower bounds in the very concrete case of squares of planar graphs is an interesting open problem.

\section{Cliques and colorings}

Another corollary that we can derive from Theorem~\ref{thm:clustering} is that in powers of sparse graphs (and even in their induced subgraphs),
the coloring number approximates both the clique number and the chromatic number up to a constant factor.

\begin{theorem}\label{thm:omega-col}
 For every class of graphs $\Cc$ with bounded expansion and $d\in \N$, there exists a constant $c\in \N$ such that
 $$\frac{1}{c}\cdot \col(H)\leq \omega(H)\leq \chi(H)\leq \col(H)\qquad \textrm{for all }H\in \cls{\Cc^d}.$$
\end{theorem}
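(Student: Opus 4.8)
The plan is to dispose of the chain $\omega(H)\le\chi(H)\le\col(H)$ first, since it comes for free: the inequality $\omega(H)\le\chi(H)$ holds because a clique of size $\omega(H)$ needs that many colors, and $\chi(H)\le\col(H)$ is exactly Lemma~\ref{lem:greedy}. Everything therefore reduces to the single inequality $\col(H)\le c\cdot\omega(H)$, and I would take $c:=\wcol_{2d}(\Cc)$, which is finite by Theorem~\ref{thm:wcol-be}. The idea for this bound is that $H$, being an induced subgraph of a power $G^d$, inherits a clustering into small cliques whose quotient has bounded degeneracy; a bounded-degeneracy graph whose vertices are blown up into cliques of size at most $\omega(H)$ should itself have degeneracy roughly $\omega(H)$ times larger.

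Concretely, I would write $H=G^d[A]$ for some $G\in\Cc$ and $A\subseteq V(G)$, fix a vertex ordering $\sigma$ of $G$ attaining $\wcol_{2d}(G,\sigma)=\wcol_{2d}(G)\le c$, and apply Theorem~\ref{thm:clustering} to $G$, $\sigma$, and $d$ to obtain a clustering $\clust$ of $G^d$. Instantiating the conclusion of that theorem at $r=1$ gives $\col(\sfrac{G^d}{\clust})=\wcol_1(\sfrac{G^d}{\clust})\le\wcol_{2d}(G,\sigma)\le c$. I would then restrict $\clust$ to $A$, forming the clustering $\clust'$ of $H$ whose blocks are the non-empty sets $B\cap A$ with $B\in\clust$. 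Two facts are key here: each such block is a clique of $G^d$ intersected with $A$, hence a clique of the induced subgraph $H$ of size at most $\omega(H)$; and, because $H$ is induced in $G^d$, the quotient $\sfrac{H}{\clust'}$ is (isomorphic to) a subgraph of $\sfrac{G^d}{\clust}$, so its coloring number is also at most $c$ by degeneracy monotonicity.

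With these two properties in hand I would exhibit an explicit cheap ordering of $V(H)$. Take a degeneracy ordering of the quotient $\sfrac{H}{\clust'}$, in which every block has at most $\col(\sfrac{H}{\clust'})-1\le c-1$ neighbours preceding it, and list the vertices of $H$ block by block following this order, breaking ties inside each block arbitrarily. For a vertex $u$ lying in a block $B\in\clust'$, its predecessors adjacent to it in $H$ are of two kinds: those inside $B$, numbering at most $|B|-1\le\omega(H)-1$; and those in a strictly earlier block $B'$, which forces $B'$ to be a preceding neighbour of $B$ in the quotient. There are at most $c-1$ such blocks $B'$, each contributing at most $\omega(H)$ vertices, for a total of at most $(\omega(H)-1)+(c-1)\omega(H)=c\cdot\omega(H)-1$ predecessors. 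Hence this ordering witnesses $\col(H)\le c\cdot\omega(H)$, which is the sought inequality.

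The only genuinely non-trivial point is the final ``blow-up'' estimate in the previous paragraph: that replacing each vertex of a graph of degeneracy at most $c-1$ by a clique of size at most $\omega(H)$ multiplies the degeneracy by a factor of at most $\omega(H)$. Everything else is bookkeeping, and I expect no obstacle from the adjacency pattern between distinct blocks, since the count above charges $u$ an entire block $|B'\cap A|\le\omega(H)$ for each preceding neighbour $B'$ of $B$ in the quotient and is thus insensitive to exactly which cross-block edges of $H$ are present.
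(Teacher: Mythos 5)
Your proof is correct and follows essentially the same route as the paper: both apply Theorem~\ref{thm:clustering} at $r=1$ to obtain a clustering whose quotient has coloring number at most $\wcol_{2d}(\Cc)$, and then order $V(H)$ block by block along a good ordering of the quotient, charging each vertex at most $\omega(H)$ predecessors per preceding block, giving $\col(H)\leq \wcol_{2d}(\Cc)\cdot\omega(H)$. The only cosmetic difference is that you first restrict the clustering to $V(H)$ and work with the quotient $\sfrac{H}{\clust'}$ (a subgraph of $\sfrac{G^d}{\clust}$), whereas the paper orders the blocks directly by an optimal ordering of $\sfrac{G^d}{\clust}$; the counting argument is identical.
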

\begin{proof}
 The second inequality is obvious, while the third follows from Lemma~\ref{lem:greedy}. Hence, we focus on proving the first inequality.

 Let $H\in \cls{\Cc^d}$ be an induced subgraph of $G^d$, for some $G\in \Cc$.
 By Theorem~\ref{thm:clustering}, there exists a clustering $\clust$ of $G^d$ such that $\col(\sfrac{G^d}{\clust})\leq \wcol_{2d}(G)$.
 Let $\tau$ be a vertex ordering of $\sfrac{G^d}{\clust}$ with optimum coloring number 
 and let $\sigma$ be a vertex ordering of $H$ satisfying the following: whenever $A<_\tau B$ for two blocks $A,B\in \clust$, then in $\sigma$ every
 vertex of $A\cap V(H)$ is placed before every vertex of $B\cap V(H)$.
 
 Observe that for every vertex $u\in V(H)$, say belonging to $A\cap V(H)$ for a block $A\in \clust$, the neighbors of $u$ that are placed before
 $u$ in $\sigma$ either belong to $A\cap V(H)$, or to $B\cap V(H)$ for some block $B$ that is adjacent to $A$ in $\sfrac{G^d}{\clust}$ and is placed before $A$ in the ordering $\tau$.
 Therefore, we have
 $$\col(H)\leq \col(H,\sigma)\leq \col(\sfrac{G^d}{\clust},\tau)\cdot \max_{A\in \clust} |A\cap V(H)|\leq \wcol_{2d}(\Cc)\cdot \max_{A\in \clust} |A\cap V(H)|.$$
 On the other hand, for every block $A\in \clust$ the set $A\cap V(H)$ is a clique in $H$, implying
 $$\max_{A\in \clust} |A\cap V(H)|\leq \omega(H).$$
 By combining the two inequalities above we conclude that
 $$\col(H)\leq \wcol_{2d}(\Cc)\cdot \omega(H),$$
 so we can set $c=\wcol_{2d}(\Cc)$.
\end{proof}

Recall that a hereditary (i.e., closed under taking induced subgraphs) class of graphs $\Cc$ is {\em{$\chi$-bounded}} if there exists a function $f\colon \N\to \N$ such that for every $G\in \Cc$, we have
$\chi(G)\leq f(\omega(G))$. If $f(\cdot)$ can be chosen to be a linear function, then we say that $\Cc$ is {\em{linearly $\chi$-bounded}}.
Then Theorem~\ref{thm:omega-col} immediately implies that whenever $\Cc$ has bounded expansion and $d\in \N$ is fixed, the class $\cls{\Cc^d}$ is linearly $\chi$-bounded.
This result was already observed in~\cite{GajarskyKNMPST18,KwonPS20} as a consequence of the fact that fixed-degree powers of classes of bounded expansion admit low shrubdepth colorings.
The new proof provided in this paper is, however, conceptually quite different and arguably somewhat simpler.

\medskip

Given these considerations, it is natural to ask whether there is any non-trivial relation between the chromatic number and the clique number in  powers of arbitrary graphs.
It turns out that this is indeed the case, as explained next (see also the work of Chalermsook et al.~\cite{ChalermsookLN14} for similar ideas).

\begin{theorem}\label{thm:general-bounds}
 For every graph $G$ and integer $d\geq 2$, we have
 $$\chi(G^d)\leq \begin{cases} \omega(G^d)^2 & \textrm{if }d\textrm{ is even};\\ \omega(G^d)^3 & \textrm{if }d\textrm{ is odd}.\end{cases}$$
\end{theorem}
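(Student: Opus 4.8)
The plan is to bypass the clustering machinery entirely and instead control $\chi(G^d)$ through the trivial greedy bound
$\chi(G^d)\le \Delta(G^d)+1=\max_{v\in V(G)}|B(v,d)|$, where $B(v,r)=\{u\in V(G)\colon \dist(u,v)\le r\}$ is the ball of radius $r$ around $v$ in $G$ (so that $B(v,d)=N_{G^d}[v]$). It then suffices to prove that balls of radius $d$ are small: $|B(v,d)|\le \omega(G^d)^2$ when $d$ is even and $|B(v,d)|\le \omega(G^d)^3$ when $d$ is odd. Throughout I write $\omega=\omega(G^d)$ and $k=\lfloor d/2\rfloor$, noting $k\ge 1$ since $d\ge 2$.

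The argument rests on two elementary metric facts. First, a clique bound on small balls: any two vertices of $B(v,k)$ are at distance at most $2k\le d$ in $G$, so $B(v,k)$ is a clique in $G^d$ and hence $|B(v,k)|\le\omega$. Second, a multiplicative \emph{ball-growth} estimate: for all $a,b\ge 0$ one has $B(v,a+b)\subseteq \bigcup_{w\in B(v,a)}B(w,b)$, since a shortest $v$--$u$ path to a vertex $u$ at distance $a+b$ passes through the vertex $w$ with $\dist(v,w)=\min(\dist(v,u),a)$, and then $\dist(w,u)\le b$. Consequently $|B(v,a+b)|\le |B(v,a)|\cdot \max_{w}|B(w,b)|$.

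Combining the two facts finishes the proof. In the even case $d=2k$, the split $d=k+k$ gives $|B(v,d)|\le |B(v,k)|\cdot \max_w|B(w,k)|\le \omega\cdot\omega=\omega^2$. In the odd case $d=2k+1$, I peel off one extra unit: decomposing $d=1+k+k$ and applying the growth estimate twice yields $|B(v,d)|\le |B(v,1)|\cdot \max_w|B(w,k)|\cdot \max_w|B(w,k)|$; since $k\ge 1$ we have $B(v,1)\subseteq B(v,k)$, so $|B(v,1)|\le\omega$ as well, and the product is at most $\omega^3$. Feeding these into $\chi(G^d)\le \Delta(G^d)+1=\max_v|B(v,d)|$ settles both cases.

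The only place parity enters — and the conceptual crux — is the decomposition of the radius. For even $d$ the exact split $d=k+k$ keeps both balls at radius $k$, the largest radius that still forces a clique, giving a clean two-factor bound. For odd $d$ the midpoint of a geodesic of odd length is an edge rather than a vertex, so one cannot split into two radius-$k$ balls; the leftover radius-$1$ factor costs an additional multiple of $\omega$ and accounts for the cube. I do not anticipate any genuine obstacle: the reasoning is purely metric, elementary, and requires no sparsity hypothesis on $G$, which is why it applies to arbitrary graphs.
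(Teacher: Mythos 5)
Your proof is correct and is essentially the paper's own argument in different notation: the paper likewise bounds $\chi(G^d)$ by $\Delta(G^d)+1$, splits the radius as $k+k$ (even case) or $1+k+k$ (odd case) via the submultiplicativity $\Delta(H^{p})\le \Delta(H)\cdot\Delta(H^{p-1})$, and uses exactly your key fact that radius-$k$ balls are cliques in $G^{2k}$, phrased there as $\omega(H^2)\ge \Delta(H)+1$ with $H=G^k$. Your formulation in terms of balls rather than degrees of powers just streamlines the $\pm 1$ bookkeeping; no substantive difference.
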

\begin{proof}
 By $\Delta(H)$ we denote the maximum degree in a graph $H$. Note that for all $p\in \N$, we have 
 $$\chi(H)\leq \Delta(H)+1,\qquad \Delta(H^{p})\leq \Delta(H)\cdot \Delta(H^{p-1}),\qquad \textrm{and}\qquad \omega(H^2)\geq \Delta(H)+1.$$
 Suppose first that $d$ is even, say $d=2k$ for some integer $k\geq 1$. Then we have
 \begin{eqnarray*}
\chi(G^{2k}) & \leq & \Delta(G^{2k})+1\leq \Delta(G^k)^2+1\\
& \leq & (\omega(G^{2k})-1)^2+1\leq \omega(G^{2k})^2.
 \end{eqnarray*}
 Now suppose that $d$ is odd, say $d=2k+1$ for some integer $k\geq 1$. Then we have
 \begin{eqnarray*}
\chi(G^{2k+1}) & \leq & \Delta(G^{2k+1})+1 \leq \Delta(G)\cdot \Delta(G^{2k})+1 \\
 & \leq & \Delta(G)\cdot \Delta(G^k)^2+1\leq \Delta(G^k)^3+1\\
 & \leq & (\omega(G^{2k})-1)^3+1\leq \omega(G^{2k+1})^3.\qedhere
 \end{eqnarray*}
\end{proof}

Note that Theorem~\ref{thm:general-bounds} cannot be extended to all induced subgraphs of degree-$d$ powers of all graphs.
Indeed, for every $d\in \N$, the hereditary closure of $\{G^d\colon G\textrm{ is a graph}\}$ comprises all graphs, as every graph is an induced subgraph of the $d$th power of its $(d-1)$-subdivision.

It would be interesting to understand to what extent the upper bounds provided by Lemma~\ref{thm:general-bounds} are tight.
The best lower bound known to us comes from the following theorem of Alon and Mohar~\cite{AlonM02}.

\begin{theorem}[\cite{AlonM02}]\label{thm:large-chromatic-power}
 For every $d,t\in \N$ there exists a constant $c\geq 0$ such that for every $r\in \N$ there exists a graph $G_r$ satisfying the following:
 $$\Delta(G_r)\leq r,\qquad \girth(G_r)\geq t,\qquad \textrm{and}\qquad \chi(G^d_r)\geq c\cdot \frac{r^d}{\log r}.$$
\end{theorem}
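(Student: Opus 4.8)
The plan is to establish this lower bound by the probabilistic method, reducing the chromatic number of the power to its independence number. Since every graph $H$ satisfies $\chi(H)\ge |V(H)|/\alpha(H)$, it suffices to produce, for each $r$, an $r$-regular graph $G_r$ on $n=n(r)$ vertices with $\girth(G_r)\ge t$ and small independence number of its $d$th power, namely $\alpha(G_r^d)\le C\,n\log r/r^d$ for a constant $C=C(d,t)$. As $G_r$ is $r$-regular we automatically get $\Delta(G_r)=r$, and the independence bound yields $\chi(G_r^d)\ge n/\alpha(G_r^d)\ge (1/C)\cdot r^d/\log r$, which is the desired estimate with $c=1/C$. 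Conceptually, the point is that for a graph of maximum degree $r$ and large girth the ball of radius $d$ around a vertex is tree-like and contains about $r^d$ vertices, so $G_r^d$ has average degree roughly $r^d$ and should behave, as far as its independence number is concerned, like a random graph of that degree; the $\log r$ factor is exactly the one in the relation $\alpha\approx (n/D)\log D$ for a random graph of average degree $D$.

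For the construction I would sample $G$ from the random $r$-regular model (the configuration model on $n$ vertices, with $n\to\infty$ along an appropriate subsequence), which has maximum degree exactly $r$. Two properties need to be certified. First, the \emph{girth}: it is classical that in this model the numbers of cycles of each length below $t$ converge jointly to independent Poisson variables with means depending only on $r$ and $t$, so $\Pr[\girth(G)\ge t]\to e^{-\mu(r,t)}>0$, a positive constant independent of $n$. Second, the \emph{independence number of the power}: I would show by a first-moment computation that $\Pr[\alpha(G^d)>s]\to 0$ as $n\to\infty$, where $s\coloneqq C\,n\log r/r^d$. Granting both, for all sufficiently large $n$ the two events hold simultaneously with positive probability, so a graph $G_r$ with $\girth(G_r)\ge t$ and $\alpha(G_r^d)\le s$ exists. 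Crucially, \emph{no deletion step is needed}: we condition on large girth rather than surgically removing short cycles, thereby avoiding the difficulty that deleting vertices from $G$ increases distances and could inflate the independence number of the power.

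The technical core, and the step I expect to be the main obstacle, is the first-moment estimate of $\Pr[S\text{ is distance-}d\text{-independent}]$ for a fixed $s$-subset $S$, i.e. that all pairs in $S$ are at distance greater than $d$ in $G$. The heuristic target is $(1-q)^{\binom{s}{2}}$ with $q=\Pr[\dist(u,v)\le d]=\Theta(r^d/n)$, for which $\binom{n}{s}(1-q)^{\binom{s}{2}}$ drops below $1$ precisely at $s\approx C\,n\log r/r^d$. The obstacle is that the events $\{\dist(u_i,u_j)>d\}$ are \emph{not} independent; being monotone decreasing they are in fact positively correlated by the Harris inequality, which is the wrong direction for an upper bound. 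I would therefore apply Janson's inequality to the family of paths of length at most $d$ joining two vertices of $S$: writing $X$ for the number of such paths, one has $\Pr[S\text{ distance-}d\text{-independent}]=\Pr[X=0]\le \exp(-\mathbb{E}X+\tfrac12\Delta)$, with $\mathbb{E}X=\Theta(\binom{s}{2}\,r^d/n)$ and $\Delta$ the correlation term summing over pairs of short paths that share an edge. The crux is to show that $\Delta$ is dominated by $\mathbb{E}X$, and here the large girth is precisely what helps: once short cycles are absent, short paths between well-separated endpoints are essentially edge-disjoint, so the overlaps contributing to $\Delta$ are of lower order and Janson's bound reproduces the independent-pairs heuristic. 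Carrying this out rigorously in the configuration model (via the standard local tree-likeness and switching estimates), together with the matching lower bound $\Pr[\dist(u,v)\le d]=\Omega(r^d/n)$ obtained from a second-moment estimate on the $\approx r^d$ candidate length-$d$ paths between a single pair, would complete the argument.
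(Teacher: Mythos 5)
The first thing to note is that the paper does not prove this statement at all: it is imported verbatim from Alon and Mohar \cite{AlonM02}, so your proposal can only be measured against that external argument. Your global strategy --- pass from $\chi(G_r^d)$ to $n/\alpha(G_r^d)$, generate a random graph of maximum degree at most $r$, show that with high probability every vertex set of size $s=Cn\log r/r^d$ contains a pair at distance at most $d$, and secure the girth separately --- is the right one and is essentially the strategy of Alon--Mohar. The gap sits exactly at what you yourself call the technical core. Janson's inequality is a theorem about product probability spaces: the ground set consists of independently sampled coordinates (e.g.\ the edge indicators of $G(n,p)$), and each bad event must be the event that a fixed set of coordinates is all present. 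The configuration model is a uniformly random perfect matching of half-edges; it is not a product space, the events ``this path is present'' are not of the required form, and there is no off-the-shelf Janson inequality there --- your parenthetical ``via the standard local tree-likeness and switching estimates'' conceals precisely the part that needs proving. Compounding this, your stated reason for $\Delta\ll\mathbb{E}X$ is not a valid argument even where Janson does apply: $\Delta$ is a \emph{deterministic} sum of $\Pr[B_i\cap B_j]$ over pairs of overlapping \emph{potential} paths, i.e.\ pairs of edge sets of the complete graph with endpoints in $S$; it is a property of the model and cannot be influenced by the absence of short cycles in the sampled graph. And if instead you condition on girth $\geq t$ first, the product structure --- and with it Janson --- is gone entirely. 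So the central estimate $\Pr[S\text{ is distance-}d\text{-independent}]\leq\exp(-\Omega(s^2r^d/n))$ is asserted, not established.

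Both of these defects are repairable, in two different ways. (a) Keep the configuration model but replace Janson by sequential exploration: reveal the radius-$\lfloor d/2\rfloor$ balls around the vertices of $S$ one at a time; when the $i$th ball (about $r^{\lfloor d/2\rfloor}$ half-edge pairings) is exposed, each pairing avoids the $O(i\,r^{\lfloor d/2\rfloor})$ previously exposed vertices with probability $1-\Theta(i\,r^{\lfloor d/2\rfloor}/n)$, so the ball is disjoint from all earlier ones with probability roughly $\exp(-\Theta(i\,r^{d}/n))$; multiplying over $i\leq s$ gives the desired $\exp(-\Omega(s^2r^d/n))$, and the union bound closes as you computed. (b) Work in $G(n,p)$ with $p=r/(2n)$ and $n$ a fixed polynomial in $r$ (say $n=r^{2d+t}$), where Janson genuinely applies; there $\Delta/\mathbb{E}X=O_d(s^2r^{d-1}/n^2)=O_d(\log^2 r/r^{d+1})$ by bare counting, with girth playing no role, the maximum degree is at most $r$ w.h.p.\ by Chernoff plus a union bound over only $\mathrm{poly}(r)$ vertices, and the w.h.p.\ at most $O(r^t)$ short cycles are killed by deleting one vertex each. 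Your fear of deletion is unfounded: if $D$ is the deleted set, then any two vertices outside the radius-$d$ balls around $D$ that are within distance $d$ in $G$ remain so in $G-D$ (a shortest path between them cannot meet $D$), whence $\alpha((G-D)^d)\leq\alpha(G^d)+2|D|r^d=\alpha(G^d)+O(r^{t+d})$, which is negligible against $s=\Theta(n\log r/r^d)$ for this choice of $n$. Route (b), deletion included, is essentially the actual Alon--Mohar proof; small values of $r$ are then absorbed by shrinking the constant $c$.
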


It is easy to observe that if $G$ has maximum degree $r$ and girth larger than $3d$, then $\omega(G^d)\leq \Oh(r^{\frac{d}{2}})$ for even $d$, and $\omega(G^d)\leq \Oh(r^{\frac{d-1}{2}})$ for odd $d$.
Then if for a fixed $d$ we combine this observation with Theorem~\ref{thm:large-chromatic-power} applied for $t=3d+1$, we can find graphs $G$ with arbitrary high $\omega(G^d)$ and satisfying
$$\chi(G^d)\geq \begin{cases} \Omega\left(\frac{\omega(G^d)^2}{\log \omega(G^d)}\right) & \textrm{if }d\textrm{ is even};\\ 
\Omega\left(\frac{\omega(G^d)^{\frac{2d}{d-1}}}{\log \omega(G^d)}\right) & \textrm{if }d\textrm{ is odd}.\end{cases}$$
This leaves a significant gap to the upper bound provided by Theorem~\ref{thm:general-bounds}, especially in the odd case.

\section{Algorithmic aspects}

Finally, we move to algorithmic applications of our graph-theoretic considerations. Theorem~\ref{thm:omega-col} together with the algorithmic tractability of the coloring number (Lemma~\ref{lem:greedy})
immediately suggests an approximation algorithm for the chromatic number of a power of a sparse graph. 

\begin{theorem}\label{thm:chromatic-apx}
 For every graph class $\Cc$ of bounded expansion and $d\in \N$, there exists a constant $c$ and a linear-time algorithm that given a graph $H\in \cls{\Cc^d}$ computes a proper coloring of $H$ with at most
 $c\cdot \chi(H)$ colors.
\end{theorem}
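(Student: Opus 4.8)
The plan is to observe that the algorithm need not know the underlying graph $G$ at all: it operates intrinsically on $H$, and the promise $H\in\cls{\Cc^d}$ is used only in the analysis to certify the quality of the output. Concretely, I would simply apply Lemma~\ref{lem:greedy} to $H$ itself. That is, given $H$ on input, compute in linear time a vertex ordering of $H$ with the optimum coloring number together with the associated greedy proper coloring; this coloring uses exactly $\col(H)$ colors. No information about $G$ or the exponent $d$ enters the computation.

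It remains to bound the number of colors used in terms of $\chi(H)$. Here I would invoke Theorem~\ref{thm:omega-col}: since $\Cc$ has bounded expansion and $d\in\N$, there is a constant $c$, which we may take to be $\wcol_{2d}(\Cc)$, such that
$$\col(H)\leq c\cdot \omega(H)\leq c\cdot \chi(H)\qquad\textrm{for all }H\in\cls{\Cc^d}.$$
The first inequality is the content of Theorem~\ref{thm:omega-col}, and the second is the trivial bound $\omega(H)\leq\chi(H)$. The quantity $\wcol_{2d}(\Cc)$ is finite by Theorem~\ref{thm:wcol-be}, and it depends only on $\Cc$ and $d$, so it is a legitimate absolute constant for the fixed class and fixed power under consideration. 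Hence the greedy coloring produced in the first step uses at most $c\cdot\chi(H)$ colors, as required. The running time is linear by the guarantee of Lemma~\ref{lem:greedy}.

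There is no genuinely hard step here; the argument is a direct combination of Lemma~\ref{lem:greedy} and Theorem~\ref{thm:omega-col}. The one conceptual point worth emphasizing is exactly the one flagged in Section~\ref{sec:clustering}: reconstructing a pre-image graph $G$ from $H=G^d$ is not known to be tractable, so it is important that the approximation algorithm avoids this entirely. The greedy coloring of $H$ is computed obliviously to the sparse structure of $G$, and the membership $H\in\cls{\Cc^d}$ serves purely as a promise guaranteeing, via Theorem~\ref{thm:omega-col}, that $\col(H)$ is within a constant factor of $\chi(H)$. Thus the whole difficulty has already been absorbed into the clustering machinery of Theorem~\ref{thm:clustering} and its corollary Theorem~\ref{thm:omega-col}, leaving the algorithmic statement as a short deduction.
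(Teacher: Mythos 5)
Your proof is correct and follows exactly the paper's own argument: run the greedy coloring algorithm of Lemma~\ref{lem:greedy} on $H$ directly, and use Theorem~\ref{thm:omega-col} to bound $\col(H)\leq c\cdot\omega(H)\leq c\cdot\chi(H)$ with $c=\wcol_{2d}(\Cc)$. Your added remark that the algorithm never needs the pre-image graph $G$ is a fair observation but not a departure from the paper's reasoning.
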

\begin{proof}
 By Theorem~\ref{thm:omega-col}, there exists a constant $c$ such that for every graph $H\in \cls{\Cc^d}$ we have $\col(H)\leq c\cdot \omega(H)\leq c\cdot \chi(H)$.
 Therefore, the algorithm of Lemma~\ref{lem:greedy} computes in linear time a proper coloring of $H$ with the required property.
\end{proof}

By Theorem~\ref{thm:omega-col}, whenever $H\in \cls{\Cc^d}$ for a fixed class of bounded expansion $\Cc$ and $d\in \N$, we have 
$$\frac{1}{c}\cdot \col(H)\leq \omega(H)\leq \col(H).$$
Thus, the number $\col(H)/c$ --- computable in linear-time --- can serve as a constant factor approximation of the clique number of $H$.
However, while in Theorem~\ref{thm:chromatic-apx} we have shown how to compute a proper coloring with an approximately optimum number of colors,
the above argument only provides an approximate clique number, but it is unclear how to actually construct a clique with the guaranteed size.
We next show that this is indeed possible in quadratic time.

\begin{theorem}\label{thm:clique-apx}
 For every graph class $\Cc$ of bounded expansion and $d\in \N$, there exists a constant $s\in \N$ and an $\Oh(nm)$-time algorithm that given a graph $H\in \cls{\Cc^d}$ finds a clique in $H$ of size at least
 $\frac{1}{s}\cdot \omega(H)$.
\end{theorem}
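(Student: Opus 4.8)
The plan is to reduce the task to finding a clique of size $\Omega(\omega(H))$ inside a \emph{bounded-size} vertex set, and then to extract it. First I would invoke Theorem~\ref{thm:omega-col}, which supplies a constant $c$ (depending only on $\Cc$ and $d$) with $\omega(H)\le \col(H)\le c\cdot \omega(H)$ for every $H\in \cls{\Cc^d}$; in particular, since $\omega(H)\le \col(H)$, it suffices to produce a clique of size $\Omega(\col(H))$. Using Lemma~\ref{lem:greedy} I would compute, in linear time, the value $\col(H)$ together with a degeneracy ordering $\sigma$, so that every vertex $v$ has at most $\col(H)-1$ neighbours preceding it in $\sigma$. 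Writing $S_v=\{v\}\cup\{u : uv\in E(H),\ u<_\sigma v\}$, if $v^\star$ is the $\sigma$-largest vertex of a maximum clique $K$, then $K\subseteq S_{v^\star}$, so $\omega(H[S_{v^\star}])=\omega(H)$ while $|S_{v^\star}|\le \col(H)\le c\cdot\omega(H)$. Trying every vertex as the candidate top, it therefore suffices to solve the following core problem for each $H[S_v]$: given a graph $F\in\cls{\Cc^d}$ whose maximum clique fills a constant fraction of $V(F)$, find a clique of size $\Omega(|V(F)|)$. Since $\sum_v |E(H[S_v])|=\Oh(nm)$, an $\Oh(|V(F)|+|E(F)|)$-time solution to the core problem yields the claimed running time.

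The structural reason such a large clique must exist is the clustering. Restricting the clustering $\clust$ of $G^d$ from Theorem~\ref{thm:clustering} (which, for an optimal $\sigma$, satisfies $\col(G^d/\clust)\le \wcol_{2d}(\Cc)=:c_0$) to $V(H)$ partitions $H$ into cliques whose quotient still has $\col\le c_0$, and hence clique number at most $c_0$. As $K$ is complete, the blocks meeting $K$ are pairwise adjacent in the quotient, so $K$ meets at most $c_0$ blocks; consequently a single block $B$ satisfies $|B\cap K|\ge \omega(H)/c_0$, and $B\cap V(H)$ is a clique. Equivalently, by Theorem~\ref{thm:chisub-bounded} there is a subchromatic colouring of $H$ into boundedly many colours, and since $K$ cannot spread over two components of the cluster graph induced by a colour class, some clique-component carries a $1/\chisub(H)$-fraction of $K$. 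In both formulations the target clique is a single cohesive object whose size is within a constant factor of $\omega(H)$.

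The hard part, which I expect to be the main obstacle, is the \emph{algorithmic} extraction of such a clique when we are handed only $H$. As noted in the remark following Corollary~\ref{thm:clustering-nd}, it is unclear how to compute a good clustering (equivalently, a bounded subchromatic colouring) of $H$ without access to the pre-image $G$, so I cannot simply read off a block; moreover, the naive greedy clique heuristic fails on the sets $S_v$, because the vertices of $K$ cannot be told apart from high-degree ``distractor'' vertices by any local degree criterion. My intended line of attack is to bypass the global clustering entirely and work inside the bounded sets $S_v$: repeatedly peel off vertices of small current degree to pass to a subgraph of large minimum degree that still retains $K$ (no vertex of degree below $\omega(H)-1$ can lie in $K$), and then argue, using that this subgraph again lies in $\cls{\Cc^d}$ and inherits $\col\le c\cdot\omega$, that iterating the back-neighbourhood reduction isolates a single block-component whose vertices form the sought clique. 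Making this iteration terminate with a genuine constant-factor guarantee---rather than circling back to an instance of the same shape---is the delicate point, and is exactly where the bounded-expansion structure, via Theorem~\ref{thm:omega-col} applied to all induced subgraphs of $H$, must be exploited most carefully.
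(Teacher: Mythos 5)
Your setup is sound and partially parallels the paper's: Theorem~\ref{thm:omega-col} plus Lemma~\ref{lem:greedy} correctly reduce constant-factor clique approximation to an algorithmic extraction problem, and your structural analysis (a single block of the clustering of Theorem~\ref{thm:clustering}, restricted to $V(H)$, carries a constant fraction of any maximum clique) is correct. But the proposal has a genuine gap, and it sits exactly at the point you yourself flag as ``the delicate point'': you give no argument that your peeling/back-neighbourhood iteration terminates after boundedly many rounds while retaining a constant-factor guarantee. Each restriction step of the kind you describe costs a multiplicative constant $c$ in the approximation, so if the recursion can run for $t$ rounds the guarantee degrades to $c^{-t}$; without a bound on $t$ depending only on $\Cc$ and $d$, there is no constant $s$. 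Since, as you correctly note, the clustering (equivalently, a bounded subchromatic colouring) cannot be computed from $H$ alone, nothing in your proposal supplies such a bound, and the proof is incomplete.

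The paper closes precisely this gap with an ingredient absent from your proposal: the finiteness of the semi-ladder index of $\Cc^d$ (Theorem~\ref{thm:semi-ladder}, due to Fabia\'nski et al.). The paper's algorithm works on all of $H$ directly (your reduction to the sets $S_v$ is unnecessary): in each round it first moves all \emph{universal} vertices of the current graph $J$ into the clique $K$ (this costs nothing in the approximation), then picks a vertex $v$ maximizing $\col(J[N_J[v]])$ --- which by Theorem~\ref{thm:omega-col} lies in a clique of size at least $\frac{1}{c}\cdot\omega(J)$ --- adds $v$ to $K$, and recurses on $J[N_J(v)]$. The universal-vertex step is the key to termination: every chosen pivot $v_i$ is non-universal, hence has a non-neighbour $u_i$ in the current graph, while $v_i$ is adjacent to every $u_j$ with $j>i$ (those vertices survive into $N(v_i)$); so the pairs $(v_i,u_i)$ form a semi-ladder, and the number of rounds is at most the semi-ladder index $q$ of $\Cc^d$. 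This yields the ratio $s=c^q$ and, with $\Oh(nm)$ work per round and constantly many rounds, the claimed running time. If you want to complete your proposal, this bound on the recursion depth (or an equivalent one) is the idea you must add; the structural clustering facts alone cannot do the job, because they are not algorithmically accessible from $H$.
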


Before we proceed to the proof of Theorem~\ref{thm:clique-apx}, we need to recall some tools.
For a graph $H$, a {\em{semi-ladder}} of length $k$ in $H$ is a pair of sequences of vertices $x_1,\ldots,x_k$ and $y_1,\ldots,y_k$ such that
\begin{itemize}
 \item for every $i\in \{1,\ldots,k\}$, vertices $x_i$ and $y_i$ are not adjacent in $H$; and
 \item for all $i,j\in \{1,\ldots,k\}$ satisfying $i<j$, vertices $x_i$ and $y_j$ are adjacent in $H$.
\end{itemize}
The {\em{semi-ladder index}} of a graph $H$ is the largest length of a semi-ladder in $H$.
As usual, the {\em{semi-ladder index}} of a graph class $\Cc$ is the supremum of the semi-ladder indices of graphs from $\Cc$.

As proved by Fabia\'nski et al.~\cite{FabianskiPST19}, powers of nowhere dense classes have finite semi-ladder indices.

\begin{theorem}[\cite{FabianskiPST19}]\label{thm:semi-ladder}
 For every nowhere dense class of graphs $\Cc$ and $d\in \N$, the class $\Cc^d$ has a finite semi-ladder index.
\end{theorem}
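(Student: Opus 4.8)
The plan is to prove directly that an arbitrarily long semi-ladder in $G^d$ cannot occur when $G$ ranges over a nowhere dense class $\Cc$, extracting the needed sparsity \emph{only} through the characterization of nowhere dense classes by \emph{uniform quasi-wideness} (see \cite{sparsity}): for every radius $r$ there are a function $N_r\colon\N\to\N$ and a constant $s_r$ such that for every $G\in\Cc$ and every vertex set $A$ with $|A|\ge N_r(m)$ there exist $S$ with $|S|\le s_r$ and $B\subseteq A\setminus S$ with $|B|\ge m$ whose elements are pairwise at distance greater than $r$ in $G-S$. A tempting shortcut --- invoking only that nowhere dense classes are stable --- does \emph{not} work: a Ramsey argument reduces a long semi-ladder either to a distance-$d$ half-graph (which stability does exclude) or to a distance-$d$ complete bipartite graph minus a perfect matching, and this second pattern is itself stable, so it must be excluded by genuine sparsity. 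The conceptual core of the UQW approach is that once a \emph{bounded} separator $S$ explains all the relevant short connections, the distances between the $x_i$'s and $y_j$'s are governed by the \emph{profile} of each vertex to $S$, of which there are only boundedly many; uniform profiles then cannot simultaneously certify that $x_{i}$ is close to $y_{j}$ for $i<j$ yet far from $y_i$.

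Concretely, suppose $G^d$ contains a semi-ladder $x_1,\dots,x_m$, $y_1,\dots,y_m$; note that the defining inequalities already force the $y_j$ to be pairwise distinct (and likewise the $x_i$), so $A=\{y_1,\dots,y_m\}$ has $m$ elements. First I would apply uniform quasi-wideness at radius $2d$ to $A$, obtaining $S$ with $|S|\le s_{2d}$ and a large index set $J$ such that $\{y_j\colon j\in J\}$ is pairwise at distance greater than $2d$ in $G-S$. For $v\in S$ define the \emph{profile} of a vertex $u$ to be the value $\min(d+1,\dist_G(u,v))$ over $v\in S$; there are at most $(d+2)^{s_{2d}}$ profiles, so by pigeonhole some $J'\subseteq J$ with $|J'|\ge 3$ has all of $\{y_j\colon j\in J'\}$ sharing one profile $\beta$. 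Let $i_0=\min J'$. For every other $j\in J'$ we have $i_0<j$, hence $\dist_G(x_{i_0},y_j)\le d$; since the $y_j$ are $2d$-scattered in $G-S$, the vertex $x_{i_0}$ is within distance $d$ in $G-S$ of \emph{at most one} of them, so for some $j\in J'\setminus\{i_0\}$ every shortest $x_{i_0}$--$y_j$ path meets $S$. Such a path yields $v\in S$ with $\dist_G(x_{i_0},v)+\beta(v)\le d$. But $y_{i_0}$ also has profile $\beta$, so the same $v$ gives a walk $x_{i_0}\to v\to y_{i_0}$ of length $\dist_G(x_{i_0},v)+\dist_G(v,y_{i_0})\le d$, whence $\dist_G(x_{i_0},y_{i_0})\le d$ --- contradicting that $x_{i_0}$ and $y_{i_0}$ are non-adjacent in $G^d$. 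Unwinding the quantifiers, this bounds the length of any semi-ladder in $G^d$ by $N_{2d}\!\left(3\,(d+2)^{s_{2d}}\right)$, a constant depending only on $\Cc$ and $d$.

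The step I expect to be the crux is the interplay that makes a bounded separator suffice: I must guarantee both that $x_{i_0}$ reaches at most one scattered $y_j$ outside $S$ --- so that some upper-triangle connection is genuinely routed through $S$ --- and that the number of $S$-profiles is bounded independently of $m$. The first is exactly what $2d$-scattering buys, and the second is immediate once $|S|$ is bounded; this is the only point at which nowhere denseness enters. Beyond this, I expect the difficulty in a fully self-contained write-up to be bookkeeping rather than conceptual: keeping the two radii consistent (distance $2d$ for the scattering, distance $d$ for the profiles and for the through-$S$ routing) and verifying that capping profiles at $d+1$ correctly discards separators that are too far to contribute a short path. Finally, since uniform quasi-wideness holds for every nowhere dense class, the same scheme applies verbatim; specializing $N_{2d}$ and $s_{2d}$ to a class of bounded expansion would additionally give an explicit, class-dependent numerical bound on the semi-ladder index.
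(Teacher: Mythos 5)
There is no in-paper proof to compare against: the paper invokes Theorem~\ref{thm:semi-ladder} as a black box from~\cite{FabianskiPST19}, so the relevant comparison is with the proof in that source. Your argument is correct, and it takes a genuinely different, more streamlined route. The source's proof is two-phase: by Ramsey, a long semi-ladder in $G^d$ contains either a long distance-$d$ half-graph ($\dist_G(x_i,y_j)\le d$ iff $i<j$), which is excluded because nowhere dense classes are stable and the relation $\dist_G(x,y)\le d$ is first-order definable in $G$, or a long distance-$d$ co-matching ($\dist_G(x_i,y_j)\le d$ iff $i\ne j$), which is excluded by a quasi-wideness/profile argument of the same flavor as yours. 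Your anchoring trick makes the Ramsey split and the stability half unnecessary: by taking $i_0=\min J'$ you only ever use the upper-triangle adjacencies $\dist_G(x_{i_0},y_j)\le d$ for $j\in J'\setminus\{i_0\}$ together with the single non-edge at $(x_{i_0},y_{i_0})$, which is exactly the information a semi-ladder guarantees, so the co-matching-style argument alone kills the whole pattern. Your side remark is also accurate and worth keeping: co-matchings form a stable class, so stability by itself cannot bound the semi-ladder index, and genuine sparsity must enter. What your route buys is a self-contained combinatorial proof with the explicit bound $N_{2d}\bigl(3(d+2)^{s_{2d}}\bigr)$ and no Ramsey blow-up or model-theoretic black box; what the original route buys is the conceptual decomposition of semi-ladders into the two extremal patterns (order versus co-matching), which is what ties the statement to stability theory.

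One point you must make explicit for the argument to close: your final contradiction requires $x_{i_0}\neq y_{i_0}$, since $\dist_G(x_{i_0},y_{i_0})\le d$ contradicts non-adjacency in $G^d$ only for distinct vertices. Under the paper's literal definition of a semi-ladder this is not automatic --- a vertex is not adjacent to itself, so any clique $v_1,\ldots,v_k$ in $G^d$ with $x_i=y_i=v_i$ would form a semi-ladder, and the theorem would then be false as stated (powers of stars contain huge cliques). Adopt the convention of~\cite{FabianskiPST19}, where the diagonal condition is $y_i\notin N_{G^d}[x_i]$ and hence the paired vertices are distinct; with that stated, every step of your proof is sound.
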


Note that the semi-ladder index of a class is equal to the semi-ladder index of its hereditary closure,
hence Theorem~\ref{thm:semi-ladder} also implies finiteness of the semi-ladder index of $\cls{\Cc^d}$ whenever $\Cc$ is nowhere dense and $d\in \N$ is fixed.

We now give a proof of Theorem~\ref{thm:clique-apx} using Theorem~\ref{thm:semi-ladder}.

\newcommand{\AlgName}{\mathsf{ApxClique}}
\newcommand{\Myto}{\ \mathbf{to}\ }

\begin{algorithm}[t]
  \KwIn{a graph $H$} 
  \KwOut{a clique $K$ in $H$}  \Indp \BlankLine
  
  $J\leftarrow H$\\
  $K\leftarrow \emptyset$\\
  \While{$J$ is not empty}{
     $U\leftarrow$ universal vertices of $J$\\
     $J\leftarrow J-U$\\
     $K\leftarrow K\cup U$\\
     \If{$J$ is empty}{break}
     \For{$u\in V(J)$}{
        $c(u)\leftarrow \col(J[N_J[u]])$
     }
     $v\leftarrow$ any vertex of $J$ with the largest $c(v)$\\
     $J\leftarrow J[N_J(v)]$\\
     $K\leftarrow K\cup \{v\}$
  }
  \KwRet{$K$}   
\caption{Algorithm $\AlgName$}
  \label{alg:apx}
\end{algorithm}

\begin{proof}[Proof of Theorem~\ref{thm:clique-apx}]
We first present the algorithm, which is summarized using pseudo-code as procedure $\AlgName$.
The algorithm constructs a clique $K$ in an input graph $H$ by iteratively extending it, starting with $K=\emptyset$.
At the same time it maintains an induced subgraph $J$ of $H$, initially set to $H$ itself, in which the remaining vertices of the clique will be sought.
We maintain an invariant that at any point, all the vertices of $J$ are adjacent in $H$ to all the vertices of $K$.
Thus, provided all further vertices added to $K$ will be drawn from $J$, we can immediately conclude the correctness of the algorithm: the returned subset of vertices is indeed a clique.

The algorithm proceeds in rounds as follows.
First, add all universal vertices of $J$ to the clique $K$ and remove them from $J$. 
Here, a vertex is {\em{universal}} if it is adjacent to all the vertices of a graph; note that universal vertices form a clique.
Next, we iterate through all the (remaining) vertices of $J$ and we pick $v$ to be any vertex for which the coloring number of the graph $J[N_J[v]]$ is maximized.
We add $v$ to $K$, restrict $J$ to the subgraph induced by $N_J(v)$, and proceed to the next round.
The loop terminates when $J$ becomes empty. Then we output $K$ as the constructed clique.

From now on we assume that the input graph $H$ belongs to $\cls{\Cc^d}$ for some class $\Cc$ of bounded expansion and fixed $d\in \N$.
The intuition behind the algorithm is that it is an iterative self-reduction scheme that uses the computation of the coloring number as an approximate guidance for where a large clique lies in the graph.
Indeed, by Theorem~\ref{thm:omega-col}, 
any vertex of the graph $J$ that maximizes $\col(J[N_J[u]])$ participates in a clique in $J$ that is of size at least $\frac{1}{c}\cdot \omega(J)$, where $c$ is the constant provided by Theorem~\ref{thm:omega-col}
for the class $\cls{\Cc^d}$.
Thus, in each round of the algorithm we lose a constant $c$ on the approximation factor.
This is formalized in the following claim.

\begin{claim}\label{cl:apx-each-step}
 Suppose that having run procedure $\AlgName$ on some $H\in \cls{\Cc^d}$, at the end of the $i$th iteration we have obtained an induced subgraph $J_i$ of $H$ and a clique $K_i$ in $H$.
 Then
 $$|K_i|+\omega(J_i)\geq \frac{1}{c^i}\cdot \omega(H)$$
\end{claim}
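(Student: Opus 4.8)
The plan is to prove the claim by induction on $i$, analyzing what happens in a single round of $\AlgName$. For the base case $i=0$, before any iteration we have $K_0=\emptyset$ and $J_0=H$, so $|K_0|+\omega(J_0)=\omega(H)=\frac{1}{c^0}\cdot\omega(H)$, and the inequality holds with equality. Here $c$ denotes the constant provided by Theorem~\ref{thm:omega-col} for the hereditary class $\cls{\Cc^d}$; note that $c\geq 1$, a fact I will use repeatedly, and that every induced subgraph of $H$ encountered during the run again belongs to $\cls{\Cc^d}$, so Theorem~\ref{thm:omega-col} applies to all of them.

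For the inductive step I would isolate the two phases of a round. Writing $J_i$ and $K_i$ for the state at the start of the $(i{+}1)$st iteration, the round first computes the set $U$ of universal vertices of $J_i$, sets $J\coloneqq J_i-U$, and moves $U$ into the clique; then it picks $v\in V(J)$ maximizing $\col(J[N_J[v]])$, sets $J_{i+1}\coloneqq J[N_J(v)]$, and adds $v$. The first phase is exact: since each universal vertex is adjacent to all of $V(J_i)$ and such vertices form a clique, every maximum clique of $J_i$ contains all of $U$, giving $\omega(J_i)=|U|+\omega(J)$. The second phase is where the real work lies; I will establish that the chosen $v$ lies in a clique of $J$ of size at least $\tfrac{1}{c}\,\omega(J)$.

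This is the crux of the argument. Let $w$ be any vertex of a maximum clique of $J$; then that clique lies inside $N_J[w]$, so $\omega(J[N_J[w]])=\omega(J)$, and since $\col\geq\omega$ always (because $\col\geq\chi\geq\omega$) we get $\col(J[N_J[w]])\geq\omega(J)$. As $v$ maximizes the coloring number over closed neighborhoods, $\col(J[N_J[v]])\geq\omega(J)$ as well. Applying the first inequality of Theorem~\ref{thm:omega-col} to $J[N_J[v]]\in\cls{\Cc^d}$ yields $\omega(J[N_J[v]])\geq\tfrac{1}{c}\,\col(J[N_J[v]])\geq\tfrac{1}{c}\,\omega(J)$. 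Finally, $v$ is universal in $J[N_J[v]]$, so $\omega(J_{i+1})=\omega(J[N_J(v)])=\omega(J[N_J[v]])-1\geq\tfrac{1}{c}\,\omega(J)-1$. I expect the main obstacle to be exactly this chain: connecting the greedily-maximized coloring number back to an actual near-maximum clique through Theorem~\ref{thm:omega-col}, together with the careful bookkeeping of the $\pm 1$ arising from passing between closed and open neighborhoods (and the degenerate sub-case where $J=J_i-U$ is empty and the loop breaks, which must be checked separately but is immediate since then $\omega(J_i)=|U|$ and nothing is lost).

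It remains to assemble the bound. Since $|K_{i+1}|=|K_i|+|U|+1$, the $+1$ cancels against the $-1$ above, and substituting $\omega(J)=\omega(J_i)-|U|$ gives
$$|K_{i+1}|+\omega(J_{i+1})\ \geq\ |K_i|+|U|+\tfrac{1}{c}\bigl(\omega(J_i)-|U|\bigr)\ =\ |K_i|+\tfrac{1}{c}\,\omega(J_i)+|U|\bigl(1-\tfrac{1}{c}\bigr).$$
Because $c\geq 1$, the last term is nonnegative, and $|K_i|\geq\tfrac{1}{c}\,|K_i|$, so the right-hand side is at least $\tfrac{1}{c}\bigl(|K_i|+\omega(J_i)\bigr)$. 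Invoking the inductive hypothesis $|K_i|+\omega(J_i)\geq\tfrac{1}{c^i}\,\omega(H)$ then produces $|K_{i+1}|+\omega(J_{i+1})\geq\tfrac{1}{c^{i+1}}\,\omega(H)$, completing the induction.
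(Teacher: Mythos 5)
Your proof is correct and follows essentially the same route as the paper's own argument: induction on $i$, splitting each round into the removal of the universal vertices $U$ and the selection of $v$ maximizing $\col(J[N_J[v]])$, and applying Theorem~\ref{thm:omega-col} to the induced subgraph $J[N_J[v]]\in\cls{\Cc^d}$ to convert the maximized coloring number into a clique of size at least $\tfrac{1}{c}\,\omega(J)$, followed by the same arithmetic using $c\geq 1$. The differences are cosmetic --- your explicit witness $w$ in a maximum clique replaces the paper's identity $\omega(J')=\max_{u}\omega(J'[N_{J'}[u]])$, and you additionally handle the early-break sub-case --- so nothing further is needed.
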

\begin{proof}
We proceed by induction on $i$. The base case for $i=0$ holds vacuously. For the induction step, let $J_{i-1}$ and $K_{i-1}$ be the considered objects after $i-1$ iterations. Then by the induction hypothesis
we have
 $$|K_{i-1}|+\omega(J_{i-1})\geq \frac{1}{c^{i-1}}\cdot \omega(H).$$
Let $J'=J_{i-1}-U$, where $U$ is the set of universal vertices in $J_{i-1}$.
For every vertex $u$ of $J'$, denote $J'_u=J'[N_{J'}[u]]$.
As $J'_u$ is an induced subgraph of $H$, we have that $J'_u\in \cls{\Cc^d}$. Hence Theorem~\ref{thm:omega-col} implies that
\begin{equation}\label{eq:beaver}
\frac{1}{c}\cdot \col(J'_u)\leq \omega(J'_u)\leq \col(J'_u). 
\end{equation}
On the other hand, we have
\begin{equation}\label{eq:squirrel}
\omega(J')=\max_{u\in V(J')} \omega(J'_u),
\end{equation}
as the maximum is attained for vertices $u$ participating in maximum-size cliques in $J'$.
Let $v$ be a vertex of $J'$ that maximizes $\col(J'_v)$; say, $v$ is the vertex picked by the algorithm. By combining~\eqref{eq:beaver} and~\eqref{eq:squirrel} we infer that
$$\omega(J')=\max_{u\in V(J')} \omega(J'_u)\leq \max_{u\in V(J')} \col(J'_u)=\col(J'_v)\leq c\cdot \omega(J'_v).$$
We conclude that
\begin{eqnarray*}
|K_i|+\omega(J_i) &    = & |K_{i-1}|+|U|+1+\omega(J'_v-\{v\}) = |K_{i-1}|+|U|+\omega(J'_v)\\
                  & \geq & |K_{i-1}|+|U|+\frac{1}{c}\cdot \omega(J') \geq |K_{i-1}|+\frac{1}{c}\cdot \omega(J_{i-1}) \geq \frac{1}{c^i}\cdot \omega(H).
\end{eqnarray*}
This concludes the proof of the claim.
 \cqed\end{proof}

On the other hand, we now show that the algorithm terminates within a bounded number of rounds, as in each round it constructs a next rung of a semi-ladder.

\begin{claim}\label{cl:num-steps}
 Procedure $\AlgName$ run on any $H\in \cls{\Cc^d}$ executes at most $q$ full iterations of the loop, where $q$ is the semi-ladder index of $\Cc^d$.
\end{claim}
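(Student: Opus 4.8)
The plan is to charge each full iteration of $\AlgName$ to one rung of a semi-ladder in $H$: I will show that if the procedure performs $k$ full iterations, then $H$ contains a semi-ladder of length $k$. Since $H \in \cls{\Cc^d}$ and the semi-ladder index of $\cls{\Cc^d}$ coincides with that of $\Cc^d$, which is $q$, this immediately yields $k \leq q$, as claimed.

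First I would fix notation tracking the run of the algorithm. Write $J_0 = H$, and for the $i$th iteration let $U_i$ be the universal vertices of the current graph $J_{i-1}$, let $\hat J_i = J_{i-1} - U_i$, let $v_i$ be the vertex selected in that iteration, and let $J_i = \hat J_i[N_{\hat J_i}(v_i)]$ be the graph passed to the next iteration. The key preliminary observation is that deleting \emph{all} universal vertices of a graph cannot create new ones: if $w$ is non-universal in $J_{i-1}$, then some non-neighbor $z$ of $w$ is itself non-universal (it misses $w$) and therefore is not deleted, so $w$ keeps the non-neighbor $z$ in $\hat J_i$. Hence $\hat J_i$ has no universal vertex at all, and in particular the greedily selected $v_i$ has at least one non-neighbor $w_i \in V(\hat J_i)$.

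Next I would set $x_i = v_i$ and $y_i = w_i$ for $i \in \{1,\ldots,k\}$ and check the two defining properties of a semi-ladder. Non-adjacency of $x_i$ and $y_i$ is immediate from the choice of $w_i$. For adjacency of $x_i$ and $y_j$ when $i < j$, I would exploit that the vertex sets shrink monotonically along the run: each passage $J_{i-1} \to \hat J_i$ deletes vertices and each passage $\hat J_i \to J_i$ restricts to a neighborhood, so $V(\hat J_j) \subseteq V(J_i)$ whenever $j > i$. Since $V(J_i) = N_{\hat J_i}(v_i)$ consists of neighbors of $v_i$ in $H$ (as $\hat J_i$ is an induced subgraph of $H$), and $y_j = w_j \in V(\hat J_j) \subseteq V(J_i)$, we get $x_i y_j = v_i w_j \in E(H)$. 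This produces a semi-ladder of length $k$ and finishes the proof.

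The one substantive point is the observation that stripping off all universal vertices yields a graph that is itself free of universal vertices; this is precisely what guarantees a non-neighbor $w_i$ for the selected vertex $v_i$, and hence a genuine rung, even though $v_i$ is chosen to maximize a local coloring number and could a priori be adjacent to almost everything. The remainder is routine bookkeeping with the nested vertex sets $V(J_i)$ and the fact that $J_i$ records exactly the $H$-neighbors of $v_i$ among the surviving vertices.
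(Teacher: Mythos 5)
Your proposal is correct and takes essentially the same route as the paper: each full iteration contributes a rung $(v_i,y_i)$ of a semi-ladder, with the rung's non-adjacency coming from non-universality of $v_i$ and the cross-adjacencies $v_iy_j$ (for $i<j$) from the fact that all later graphs live inside the neighborhood of $v_i$. The only difference is cosmetic: the paper picks the non-neighbor of $v_i$ in $J_{i-1}$ itself, where $v_i$ is non-universal simply because it survived the deletion of the universal vertices, which sidesteps your auxiliary observation that deleting all universal vertices creates no new ones.
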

\begin{proof}
Suppose the algorithm executes $\ell$ full iterations.
Let $v_1,\ldots,v_\ell$ be the vertices added to the clique $K$ in consecutive iterations, in the last line of the loop.
Since each vertex $v_i$ is not universal in the graph $J_{i-1}$ (i.e., the graph at the beginning of the $i$th iteration), there is another vertex $u_i$ of $J_{i-1}$ that is not adjacent to $v_i$.
It is then easy to see that vertices $v_1,\ldots,v_\ell$ and $u_1,\ldots,u_\ell$ form a semi-ladder in $H$, implying that $\ell\leq q$.
\cqed\end{proof}

Claims~\ref{cl:apx-each-step} and~\ref{cl:num-steps} imply that the clique $K$ output by the algorithm has size at least $\frac{1}{s}\cdot \omega(H)$, where $s=c^q$. 
It remains to argue that the algorithm can be implemented so that it runs in time $\Oh(nm)$.
Observe that in each iteration of the main loop we identify and remove universal vertices --- which can be done in time $\Oh(m)$ --- and then we compute the coloring number of $\Oh(n)$ induced subgraphs.
As each such computation can be performed in time $\Oh(m)$, every iteration of the main loop takes $\Oh(nm)$ time. Finally, Claim~\ref{cl:num-steps} asserts that the algorithm terminates
after a constant number of rounds, so the total time complexity of $\Oh(nm)$ follows.
\end{proof}

Theorem~\ref{thm:clique-apx} suggests that maximum cliques in powers of classes of sparse graphs are somewhat algorithmically tractable.
We now prove that in fact, the problem of finding a maximum-size clique in powers of nowhere dense classes is polynomial-time solvable.
This comes as a corollary of the following, even stronger fact: the number of (inclusion-wise) maximal cliques is always polynomial, and they can be enumerated in polynomial time.

\begin{theorem}\label{thm:maximal-enumeration}
 For every nowhere dense class of graphs $\Cc$ and $d\in \N$, there exists $q\in \N$ such that for every $n$-vertex graph $H\in \cls{\Cc^d}$, the number of maximal cliques in 
 $H$ is at most $n^q$. Moreover, given $H\in \cls{\Cc^d}$ all maximal cliques in $H$ can be enumerated in polynomial time.
\end{theorem}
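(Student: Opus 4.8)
The plan is to rely on the finiteness of the semi-ladder index rather than on the clustering machinery. Let $q$ denote the semi-ladder index of $\cls{\Cc^d}$, which is finite by Theorem~\ref{thm:semi-ladder} together with the remark that a class and its hereditary closure share the same semi-ladder index. Fix $H \in \cls{\Cc^d}$. For a set $S \subseteq V(H)$ write $N_H[S] = \bigcap_{s \in S} N_H[s]$ for the common closed neighborhood of $S$. Two observations drive the proof: first, a set $K$ is a maximal clique exactly when $K$ is a clique and $N_H[K] = K$ (equivalently, no outside vertex is adjacent to all of $K$); second, such a $K$ is already pinned down by a bounded-size subset of itself.

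The crux is the following kernel lemma: every maximal clique $K$ admits a subset $S \subseteq K$ with $|S| \le q$ and $N_H[S] = K$. I would prove this by a greedy process. Start with $S_0 = \emptyset$, so $N_H[S_0] = V(H) \supseteq K$, and observe that at every stage $N_H[S_i] \supseteq K$, since $S_i \subseteq K$ and $K$ is a clique (each $v \in K$ lies in $N_H[s]$ for all $s \in S_i$). If some $y_{i+1} \in N_H[S_i] \setminus K$ remains, then because $y_{i+1} \notin K = N_H[K]$ there is $x_{i+1} \in K$ non-adjacent to $y_{i+1}$; set $S_{i+1} = S_i \cup \{x_{i+1}\}$, which removes $y_{i+1}$ from the common neighborhood. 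The process stops precisely when $N_H[S_m] = K$. The point is that the sequences $x_1, \dots, x_m$ and $y_1, \dots, y_m$ form a semi-ladder: by construction $x_i$ and $y_i$ are non-adjacent, whereas for $i < j$ we have $x_i \in S_{j-1}$ and $y_j \in N_H[S_{j-1}] \subseteq N_H[x_i]$ with $y_j \ne x_i$ (as $y_j \notin K \ni x_i$), so $x_i$ and $y_j$ are adjacent. Hence $m \le q$, giving $|S| \le q$.

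The counting bound is then immediate: each kernel $S$ determines $K = N_H[S]$, so the number of maximal cliques is at most the number of subsets of $V(H)$ of size at most $q$, namely $\sum_{i=0}^{q} \binom{n}{i} = \Oh(n^q)$. For enumeration, iterate over all $\Oh(n^q)$ subsets $S \subseteq V(H)$ of size at most $q$, compute the candidate $N_H[S]$, and test in polynomial time whether it is a maximal clique (it is a clique and admits no universal external vertex), discarding duplicates. By the kernel lemma every maximal clique is produced, and every retained set is genuinely a maximal clique, so this correctly enumerates them in time $n^{q}\cdot \mathrm{poly}(n)$, which is polynomial since $q$ is a constant depending only on $\Cc$ and $d$.

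I expect the only real obstacle to be the kernel lemma, and within it the bookkeeping that the greedy sequences satisfy the exact semi-ladder conditions with the correct orientation of the index inequality (non-edges on the diagonal, edges strictly above it). Everything else is routine: the fixed-point characterization of maximal cliques, the injectivity of the kernel-to-clique assignment, and the polynomial-time implementation of computing common closed neighborhoods and verifying maximality. Notably, this argument does not invoke Theorem~\ref{thm:clustering} at all; it is a direct consequence of bounded semi-ladder index.
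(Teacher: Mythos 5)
Your proof is correct and follows essentially the same route as the paper: both reduce the theorem to the same kernel lemma --- every maximal clique $K$ satisfies $K=\bigcap_{a\in S} N_H[a]$ for some $S\subseteq K$ with $|S|\le q$, where $q$ is the semi-ladder index made finite by Theorem~\ref{thm:semi-ladder} --- and then count and enumerate the at most $n^q$ candidate sets $S$, checking each candidate intersection for being a maximal clique. The only cosmetic difference is in how the small set $S$ is extracted: the paper takes an inclusion-wise minimal subset of $K$ with the intersection property and uses minimality to produce the semi-ladder witnesses, whereas you build $S$ by a greedy elimination of external candidates; the two constructions are interchangeable and yield the same bound.
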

\begin{proof}
 Let $q$ be the semi-ladder index of $\Cc^d$. Then $q$ is finite by Theorem~\ref{thm:semi-ladder}.
 We prove that for every $H\in \cls{\Cc^d}$ and every maximal clique $K$ in $H$, there exists a set of vertices $A\subseteq K$ such that $|A|\leq q$ and 
 \begin{equation}\label{eq:intersection}
 K=\bigcap_{a\in A} N_H[a].
 \end{equation}
 Note this proves the theorem statement, as the number of sets $A$ of size at most $q$ is bounded by $n^q$, and in polynomial time we can enumerate all of them and for each check whether $\bigcap_{a\in A} N_H[a]$ is a maximal clique.
 
 Observe that since $K$ is inclusion-wise maximal, the set $A=K$ satisfies~\eqref{eq:intersection}.
 Therefore, we can pick $A$ to be any inclusion-wise minimal subset of $K$ satisfying~\eqref{eq:intersection}.
 It suffices to prove that then $|A|\leq q$.
 For this, observe that since $A$ is inclusion-wise minimal, for every $a\in A$ there exists $b(a)\notin K$ satisfying the following:
 $a$ and $b(a)$ are not adjacent, but $a'$ and $b(a)$ are adjacent for all $a'\in A$, $a'\neq a$.
 This implies that $A$ and $B=\{b(a)\colon a\in A\}$, with any enumerations where each $a$ and corresponding $b(a)$ receive the same index, form a semi-ladder in $H$.
 Hence $|A|\leq q$ and we are done.
\end{proof}


Theorem~\ref{thm:maximal-enumeration} suggests that maximal cliques in powers of sparse graphs are potentially an interesting object of study.
Note that the degree of the polynomial bound provided by Theorem~\ref{thm:maximal-enumeration} is dependent on the given class $\Cc$ and distance parameter $d$.
One could ask whether this could be improved to a polynomial bound whose degree is a universal constant, for instance a linear or an almost linear bound. 
Unfortunately this is not the case; the following statement provides a counterexample already in the regime of classes of bounded treedepth 
(which in particular have bounded expansion~\cite{sparsity}).

\begin{figure}[t]
\begin{tikzpicture}[scale=1.3]
\small
   \tikzstyle{nvertex}=[circle,draw=black,fill=gray!30,minimum size=0.1cm,inner sep=0pt]
   \tikzstyle{vertex}=[circle,draw=black,fill=black,minimum size=0.1cm,inner sep=0pt]

   \fill[blue!10,rounded corners=4] (-0.4,-0.4) rectangle (2.4,1.4);
   
   \foreach \n/\x/\y in {u_1/0/0, u_2/1/0, u_3/2/0, u_4/0/1, u_5/1/1, u_6/2/1} {
     \node[vertex] (\n) at (\x, \y) {};
     \node[below] at (\n) {$\n$};
   }
   
   \foreach \i in {1,2,3} {
   
   \begin{scope}[shift={(-3.5+2*\i,4)}]
    
    \fill[orange!10,rounded corners=4] (-0.4,-0.2) rectangle (1.4,1);
   
    \node[nvertex] (pI\i) at (0,0) {};
    \node[nvertex] (pIJ\i) at (0.5,0.5) {};
    \node[nvertex] (pJ\i) at (1,0) {};
    \draw (pI\i) -- (pIJ\i);
    \draw (pJ\i) -- (pIJ\i);
    
    \node[above left] at (pI\i) {$v^{\i}_I$};
    \node[above] at (pIJ\i) {$v^{\i}_{\{I,J\}}$};
    \node[above right] at (pJ\i) {$v^{\i}_J$};
    
    \foreach \n in {1,3,5} {
     \draw (pI\i) -- (u_\n);
    }
    \foreach \n in {2,4,6} {
     \draw (pJ\i) -- (u_\n);
    }
    
   \end{scope}

   }
   
\end{tikzpicture}
	\caption{Graph $G_{3,3}$. For clarity, only the vertices of $A$ and those created for the partition $\{I,J\}=\{\{1,3,5\},\{2,4,6\}\}$ are depicted.}
	\label{fig:cliques}
\end{figure}
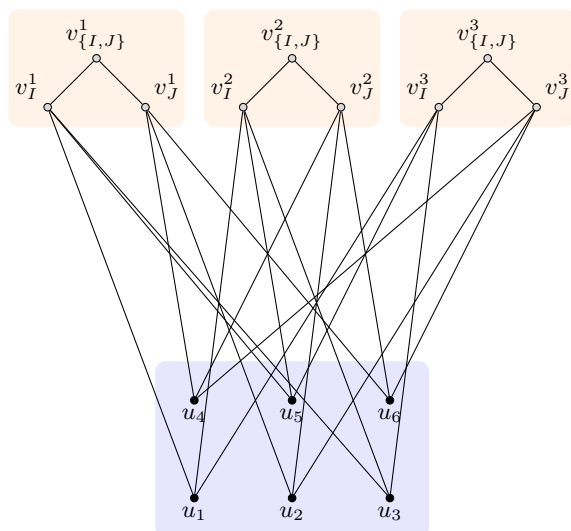

\newcommand{\Qq}{\mathcal{Q}}

\begin{theorem}\label{thm:cliques-lb}
 For all $n,d\in \N$ there exists a graph $G_{n,d}$ such that
 \begin{itemize}
  \item $G_{n,d}$ has $\frac{3}{2}\binom{2d}{d}n+2d$ vertices;
  \item the treedepth of $G_{n,d}$ is at most $2d+2$; and
  \item the graph $(G_{n,d})^2$ has at least $n^{\frac{1}{2}\binom{2d}{d}}$ different maximal cliques.
 \end{itemize}
\end{theorem}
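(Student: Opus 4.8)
The plan is to realize the construction indicated by Figure~\ref{fig:cliques}. I take a set $A=\{u_1,\dots,u_{2d}\}$ of $2d$ pairwise non-adjacent vertices, indexed by $\{1,\dots,2d\}$. For every partition $P=\{I,J\}$ of $\{1,\dots,2d\}$ into two parts of size $d$ --- there are exactly $\tfrac12\binom{2d}{d}$ of them --- and every $i\in\{1,\dots,n\}$, I add a gadget on three vertices $v^i_{P,I},v^i_{P,\mathrm{mid}},v^i_{P,J}$ forming a path $v^i_{P,I}-v^i_{P,\mathrm{mid}}-v^i_{P,J}$, and I join $v^i_{P,I}$ to every $u_a$ with $a\in I$ and $v^i_{P,J}$ to every $u_a$ with $a\in J$. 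The vertex count $2d+3n\cdot\tfrac12\binom{2d}{d}$ is then immediate. For the treedepth bound I would delete all of $A$ (that is $2d$ vertices): what remains is a disjoint union of paths on three vertices, each of treedepth $2$, so $\mathrm{td}(G_{n,d})\le 2d+2$.

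The heart of the argument is a precise description of adjacencies in $(G_{n,d})^2$, which I would record as three short claims. First, a vertex $u_a$ is adjacent in the square to a side vertex $v^i_{P,X}$ if and only if $a\in X$; if $a\notin X$ the $G_{n,d}$-neighbourhoods of the two vertices are disjoint, so their distance is at least $3$. Second, two side vertices $v^i_{P,X}$ and $v^{i'}_{P',X'}$ are adjacent in the square exactly when they have a common $G_{n,d}$-neighbour, and unwinding this yields three regimes: for $P\ne P'$ they are \emph{always} adjacent (two $d$-subsets of a $2d$-set are disjoint only if complementary, which happens only within a single partition); for $P=P'$ and $X=X'$ they are adjacent via any $u_a$ with $a\in X$; but for $P=P'$, $X\ne X'$ and $i\ne i'$ they are \emph{non-adjacent}. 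Third, a middle vertex $v^i_{P,\mathrm{mid}}$ is adjacent in the square precisely to all of $A$ and to its own two side vertices, and to nothing else.

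Using this, for every function $c$ assigning to each partition $P=\{I,J\}$ an index $c(P)\in\{1,\dots,n\}$ I set
$$K_c=\{v^{c(P)}_{P,I}\colon P=\{I,J\}\}\cup\{v^{c(P)}_{P,J}\colon P=\{I,J\}\}.$$
By the cross-partition regime (always adjacent when $P\ne P'$) together with the within-gadget adjacency (the two side vertices of one gadget are joined through their middle vertex), $K_c$ is a clique in $(G_{n,d})^2$. I then verify maximality: no $u_a$ can be added, since for each partition $a$ lies in exactly one of its two parts and so misses the other side vertex of $K_c$; no side vertex $v^i_{P_0,X_0}$ with $i\ne c(P_0)$ can be added, since it is non-adjacent to the opposite-part side vertex $v^{c(P_0)}_{P_0,\overline{X_0}}\in K_c$ by the non-adjacency regime; and (provided $d\ge 2$, so that $K_c$ meets at least two partitions) no middle vertex can be added, since it is adjacent only to $A$ and to its own gadget, not to foreign side vertices. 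Finally, distinct $c$ give distinct cliques: if $c(P_0)\ne c'(P_0)$ then $K_c$ contains an index-$c(P_0)$ side vertex of $P_0$ absent from $K_{c'}$. This exhibits $n^{\frac12\binom{2d}{d}}$ distinct maximal cliques.

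The main obstacle is the second adjacency claim, and specifically the non-adjacency of $v^i_{P,X}$ and $v^{i'}_{P,\overline{X}}$ for $i\ne i'$: this is exactly what forbids mixing indices inside a single partition and thereby forces a clean ``one index per partition'' choice, converting the index set into the source of the $n^{\frac12\binom{2d}{d}}$ count. Everything else is careful but routine distance bookkeeping in $G_{n,d}$. The only point warranting separate mention is the degenerate case $d=1$: there is a single partition, so $K_c$ is no longer maximal (its own middle vertex can be appended), and one should instead take the gadget triangles $\{v^i_{P,I},v^i_{P,\mathrm{mid}},v^i_{P,J}\}$, of which there are again $n=n^{\frac12\binom{2}{1}}$.
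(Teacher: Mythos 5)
Your construction is exactly the paper's: the same independent set $A$, the same three-vertex gadgets indexed by the $\tfrac12\binom{2d}{d}$ balanced partitions, the same vertex-count and treedepth arguments, and the same family of cliques indexed by functions from partitions to $\{1,\ldots,n\}$; the paper simply states the maximality of these cliques as a claim ``left to the reader,'' which your three adjacency claims about $(G_{n,d})^2$ verify correctly. Your remark on $d=1$ is in fact a small genuine correction to the paper: with only one partition the claimed two-vertex sets are cliques but \emph{not} maximal (the gadget's middle vertex extends them), and your fix of taking the $n$ gadget triangles instead is exactly what is needed to preserve the bound $n^{\frac12\binom{2}{1}}=n$.
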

\begin{proof}
The graph $G_{n,d}$ is constructed as follows; see Figure~\ref{fig:cliques}.
First, introduce a set of $2d$ pairwise non-adjacent vertices $A=\{u_1,u_2,\ldots,u_{2d}\}$.
Let $\Qq$ be the set of all unordered partitions of $A$ into two subsets $I$ and $J$, each of size $d$. Note that $|\Qq|=\frac{1}{2}\binom{2d}{d}$.
Next, for each $\{I,J\}\in \Qq$ and each $i\in \{1,\ldots,n\}$, introduce three vertices $v^i_I,v^i_{\{I,J\}},v^i_J$, and
make
\begin{itemize}
 \item $v^i_I$ adjacent to all the vertices of $I$;
 \item $v^i_J$ adjacent to all the vertices of $J$; and
 \item $v^i_{\{I,J\}}$ adjacent to $v^i_I$ and $v^i_J$.
\end{itemize}
This concludes the construction of $G_{n,d}$.

Observe that $G_{n,d}$ has exactly $3n|\Qq|+2d=\frac{3}{2}\binom{2d}{d}n+2d$ vertices.
To see that the treedepth of $G_{n,d}$ is at most $2d+2$, observe that
after removing the $2d$ vertices of $A$, the graph breaks into a disjoint union of paths of length $2$, and each of them has treedepth $2$.
Finally, for the last property we use the following straightforward claim, whose verification is left to the reader.

\begin{claim}\label{cl:cliques}
 For every function $f\colon \Qq\to \{1,\ldots,n\}$, the set
 $$\{v^{f(\{I,J\})}_I,v^{f(\{I,J\})}_J\,\colon\, \{I,J\}\in \Qq\}.$$
 is a maximal clique in $(G_{n,d})^2$.
\end{claim}

Thus, Claim~\ref{cl:cliques} provides $n^{|\Qq|}=n^{\frac{1}{2}\binom{2d}{d}}$ different maximal cliques in $(G_{n,d})^2$.
\end{proof}

Theorems~\ref{thm:maximal-enumeration} and~\ref{thm:cliques-lb} highlight another interesting open problem.
The running time of the algorithm of Theorem~\ref{thm:maximal-enumeration} is polynomial, but the degree of this polynomial depends on $\Cc$ and $d$.
As witnessed by Theorem~\ref{thm:cliques-lb}, this is unavoidable even for squares of classes of bounded treedepth, provided we insist on listing all maximal cliques.
However, it may be that such an exhaustive enumeration is not necessary for finding a clique of maximum cardinality.
Hence the question: may it be true that for every fixed class of bounded expansion $\Cc$ and $d\in \N$, the problem of finding a maximum-size clique in a graph from $\Cc^d$ can be solved in time
$\Oh(n^c)$, where $c$ is a univeral constant, independent of $\Cc$ and $d$? Note that Theorem~\ref{thm:clique-apx} gives such an algorithm with $c=3$ for the problem of find a clique of {\em{approximately}} maximum size,
where the approximation ratio is bounded by a constant depending on $\Cc$ and $d$.

\section*{Acknowledgments}
The authors would like to thank Louis Esperet for inspiring discussions.

\bibliographystyle{amsplain}
\bibliography{subchi}

\end{document}